\numberwithin{equation}{section}
\renewcommand{\d}{\mathrm d}
\newcommand{\x}{\hat{x}}
\newcommand{\xn}{\x_n}
\newcommand{\eps}{\varepsilon}
\newcommand{\poo}{p^{(0)}}
\newcommand{\Zcl}{Z^{(0)}_{N}}
\newcommand{\pcl}{p^{(0)}_{N}}
\newcommand{\mucl}{\mu^{(0)}_{N}}
\newcommand{\DD}{\mathscr D}
\newcommand{\R}{\mathbb R}
\newcommand{\MM}{\mathscr M}
\newcommand{\Prob}{\mathbb P}
\newcommand{\N}{\mathbb N}
\newcommand{\1}{\mathds 1}
\newcommand{\Z}{\mathbb Z}
\newcommand{\C}{\mathbb C}
\def\be{\begin{equation}}
\def\ee{\end{equation}}
\newcommand{\eq}{\begin{equation}}
\newcommand{\eeq}{\end{equation}}
\newcommand{\E}{\mathbb E}
\DeclareMathOperator*{\argmax}{argmax}
\DeclareMathOperator{\dist}{dist}
\journalname{Arxiv preprint, June 12th 2015}
\begin{document}

\title{Limit theorems for monomer-dimer mean-field models with attractive potential}

\author{Diego Alberici\and Pierluigi Contucci\and Micaela Fedele \and Emanuele Mingione}

\institute{
			University of Bologna - Department of Mathematics \\
            piazza di Porta San Donato 5, Bologna (Italy) \\
            \email{diego.alberici2@unibo.it, pierluigi.contucci@unibo.it, micaela.fedele2@unibo.it, emanuele.mingione2@unibo.it} }

\date{}


\maketitle

\begin{abstract}
The number of monomers, in a monomer-dimer mean-field model with an attractive potential,
fluctuates according to the central limit theorem when the parameters are outside the critical curve.
At the critical point the model belongs to the same universality class of the mean-field ferromagnet. Along the critical
curve the monomer and dimer phases coexist.
\end{abstract}


\section*{Introduction}
Interacting particle systems described with statistical mechanics models are known
to have different fluctuation properties on their critical points.
In the mean-field ferromagnet, for instance, the sum of the spins centered around its mean and
normalised with the square root of the total volume, converges toward a normal random variable (central
limit theorem) away from the critical line.
At the critical point instead a non-normal behaviour emerges, i.e. the limiting probability distribution
for the sum of the spins centered and suitably normalised is not Gaussian \cite{simon, ellis1978statistics}.

In this paper we consider a mean-field system of interacting monomers and dimers where,
beyond the hard-core interaction among particles, an attractive interaction is added to favour
configurations where similar particles lie in neighbouring sites. The peculiar features of the presented model come from
the combined presence of the two interactions.
We show that the central limit theorem and the law of large numbers hold for the number of monomers at general values of the parameters.
At the critical point instead the central limit theorem breaks down and the number of monomers centered around its mean and normalised with the exponent $3/4$ of the total volume has a limiting density proportional to $\exp(-cx^4)$, i.e. the system exhibits the same critical behavior of the mean-field ferromagnet.
We also show that along the critical curve the law of large numbers breaks down, due to the coexistence of the {\it monomer} and the {\it dimer} phases. Unlike the Curie-Weiss model, the relative weights of these phases are non-constant and display two contributions that correspond to the two types of interaction.

We provide a rigorous proof of the mentioned results by first studying the properties
of the moment generating function for the model when the attractive interaction is zero.
Here the difficulty of the problem stems from
the fact that even in the absence of the attraction the system keeps its interacting nature and
the equilibrium measure does not factorise. To solve this problem we use a Gaussian representation
for the pure monomer-dimer model previously introduced in \cite{ACM2} which has the purpose of {\it decoupling}
the hard-core interaction. When instead we consider the attractive potential we follow the Gaussian
convolution method introduced in \cite{ellis1980limit}.

It would be interesting to further extend the results presented in this paper in the same spirit
of those obtained for the mean-field ferromagnet in \cite{ellis1978limit,ellis1980limit}
and also test for the same purpose other methods like those based on interchangeability \cite{chatt, loewe}
or those of Lee-Yang type  \cite{Lebowitz}.

The paper is organised as follows. Section 1 presents the definition of the model and the precise statements of the results.
In Section 2 we consider the pure hard-core model and prove the law of large numbers and central limit theorem by using the Gaussian representation and an extended Laplace method (reported in the Appendix).
In Section 3 we consider the  hard-core model with attraction and, using the method of the Gaussian convolution together with the formerly introduced Gaussian representation, we prove the law of large numbers, the central
limit theorem and their breakdown respectively along the critical curve and at the critical point.

\section{Definitions and Results}\label{defmain}

Let $G=(V,E)$ be a finite graph with vertex set $V$ and edge set $E\subseteq\{uv\equiv\{u,v\}\;|\;u,v\in V,\,u\neq v\}$.

\begin{definition}
A dimer configuration on the graph $G$ is a set $D$ of pairwise non-incident edges, called dimers:
\begin{equation}\label{dconstraint}
D\subseteq E \quad\text{and}\quad
\text{for each } v\in V \text{ there is at most one } u\in V \text{ such that } uv\in D \;.  
\end{equation}
The associated set of dimer-free vertices, called monomers, is denoted by
\begin{equation}
\MM_G(D):=\{v\in V\;|\;\forall u\in V\ uv\not\in D\; \}.
\end{equation}
We denote by $\DD_G$ the configuration space, i.e. the set of all possible dimer configurations on the graph $G\,$.
\end{definition}

We notice that by definition
\be\label{hardcore}
|\MM_G(D)|+2|D|=|V| \quad\forall\,D\in\DD_G \;.
\ee

In this paper we restrict our attention to the complete graph with vertex set $V=\{1, \ldots, N\}$ and edge set $E=\{uv\,|\, u,v\in V,\, u\neq v\}\,$. The corresponding configuration space will be denoted by $\DD_N$ and the set of monomers associated to the dimer configuration $D$ by $\MM_N(D)$.

A fundamental quantity is the \textit{number of monomers} for a given dimer configuration $D\in\DD_N\,$: $S_N(D):=|\MM_N(D)|$.
$S_N(D)$ can be seen as a sum of $N$ variables introducing, for a given $D\in\DD_N$ and for all $v\in V$, a monomer occupancy variable
\begin{equation} \label{eq: alpha}
\alpha_v(D) := \begin{cases}
1\,,& \text{if }v\in\MM_N(D) \\
0\,,& \text{otherwise}
\end{cases}.
\end{equation}
Thus, one can write
\begin{equation}\label{mag}
S_{N}(D)= \sum_{v=1}^{N}\alpha_{v}(D).
\end{equation}
We also define the \textit{empirical monomer density} as
\begin{equation}\label{mag}
m_{N}(D):=\frac{1}{N}S_N(D)
\end{equation}
which represents an analogous  of the empirical magnetization in magnetic models.

In  \cite{ACM} the authors consider a monomer-dimer model with imitative interaction on the complete graph, that we call \textit{Imitative Monomer-Dimer model} (IMD model), defined as follows.
For each integer $N$,  the Hamiltonian function is
\begin{equation}\label{hami}
- H_{N}(D) \,:=\, N\left(\left(h-J\right)m_{N}(D)+J\,\left(m_{N}(D)\right)^2 \right) \quad\forall\,D\in \DD_N
\end{equation}
where $h\in\R$ is the \textit{external field} and $J\geq0$ is the \textit{imitative coupling}.
The Hamiltonian \eqref{hami} induces a Gibbs probability measure on the configurations space $\DD_N\,$
\begin{equation} \label{eq: mu}
\mu_{N}(D) \,:=\, \frac{1}{Z_N}\;N^{-|D|}\exp(-H_{N}(D)) \quad\forall\,D\in\DD_N \;,
\end{equation}
where
\begin{equation}\label{partitionf}
Z_N \,:=\, \sum_{D\in\DD_N}N^{-|D|}e^{-H_N(D)}
\end{equation}
is the \textit{partition function}.
The factor $N^{-|D|}$ is the necessary normalisation working on the complete graph.
As usual, the quantity
\be\label{pressuredens}
p_N \,:=\, \frac{1}{N}\log Z_N
\ee
is called \textit{pressure density}.

\begin{remark}
Despite the Hamiltonian \eqref{hami} depends only on the numbers of monomers, it is possible to show \cite{ACM} that in our case, namely on the complete graph, any general Hamiltonian depending also on the number of dimers and the relative couplings is equivalent, up to a constant, to \eqref{hami}. Thus, the parameters $h,J$ take into account both monomer/dimer external fields and monomer-monomer/dimer-dimer/monomer-dimer couplings.
\end{remark}

Beside the formal analogy between \eqref{hami} and the Hamiltonian function of a Curie-Weiss model, we want to stress their main difference: in the former the configuration space $\DD_N$ is \textbf{not} a product space because of the hard-core constraint \eqref{dconstraint}.

Let us briefly recall the results obtained in \cite{ACM}. We refer to the original work for the details.

\begin{theorem}\label{mainpressure}
The thermodynamic limit of the pressure density of the IMD model is given by
\begin{equation}\label{termo_limit}
 \lim_{N\rightarrow\infty}p_{N} \,=\, \sup_m \widetilde{p} (m)
\end{equation}
\noindent where
\begin{equation}\label{tildep}
\widetilde p(m)\,:=\,-Jm^2+\poo((2m-1)J+h) \;,
\end{equation}
\begin{equation}\label{funzione.p}
\poo(h):=-\frac{1-g(h)}{2}-\frac{1}{2}\log(1-g(h))=-\frac{1-g(h)}{2}-\log g(h)+ h \;,
\end{equation}
\begin{equation}\label{funzione.g}
g(h):=\frac{e^{h}}{2}(\sqrt{e^{2h}+4}-e^{h}) \;.
\end{equation}
\end{theorem}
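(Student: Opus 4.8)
\noindent\emph{Proof strategy.} The plan is to linearise the quadratic term $J\,m_N(D)^2$ by a Hubbard--Stratonovich (Gaussian convolution) identity, so as to write $Z_N$ as a Gaussian average of the partition function of the \emph{pure} monomer--dimer model with a tilted external field, and then to evaluate the resulting one-dimensional integral by Laplace's method. The two ingredients are therefore (i) the thermodynamic limit of the pure-model pressure, and (ii) a Laplace estimate for a Gaussian integral.

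\smallskip

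For (i), group configurations by the number of dimers: the number of $D\in\DD_N$ with $|D|=k$ is the number of $k$-matchings of the complete graph, namely $\frac{N!}{(N-2k)!\,k!\,2^k}$, and by \eqref{hardcore} such a $D$ has $S_N(D)=N-2k$ monomers. Hence, for every $a\in\R$,
\[
\Xi_N(a)\,:=\,\sum_{D\in\DD_N}N^{-|D|}e^{a\,S_N(D)}\,=\,\sum_{k=0}^{\lfloor N/2\rfloor}\frac{N!}{(N-2k)!\,k!\,2^k}\,N^{-k}\,e^{a(N-2k)}\,.
\]
Writing $m=1-2k/N$ and applying Stirling's formula to each term (the $N$-dependent factors are arranged so that the $\log N$ contributions cancel) followed by Laplace's method --- essentially the extended Laplace lemma of the Appendix --- gives
\[
\lim_{N\to\infty}\tfrac1N\log\Xi_N(a)\,=\,\sup_{m\in[0,1]}\big(I(m)+a\,m\big)\,,\qquad I(m):=-\tfrac12+\tfrac m2-m\log m-\tfrac{1-m}{2}\log(1-m)\,,
\]
with the convention $0\log 0=0$. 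Optimising, the maximiser $m^\ast$ solves $I'(m^\ast)+a=0$, i.e. $\sqrt{1-m^\ast}/m^\ast=e^{-a}$, which is exactly the equation defining $g(a)$ in \eqref{funzione.g}; substituting $m^\ast=g(a)$ back into $I(m)+am$ returns precisely \eqref{funzione.p}. Thus $\lim_N\tfrac1N\log\Xi_N(a)=\poo(a)$ for all $a\in\R$, and $\poo$ is real-analytic, bounded below, and grows linearly at $+\infty$.

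\smallskip

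For (ii), since $J\ge 0$ we use $e^{NJm^2}=\sqrt{NJ/\pi}\int_\R e^{-NJx^2+2NJxm}\,dx$ with $m=m_N(D)$; combining with the linear term and exchanging the (positive, summable) sum over $\DD_N$ with the integral yields
\[
Z_N\,=\,\sqrt{\frac{NJ}{\pi}}\int_\R e^{-NJx^2}\,\Xi_N\big((2x-1)J+h\big)\,dx\,.
\]
Consequently $\tfrac1N\log Z_N=o(1)+\tfrac1N\log\int_\R\exp\!\big(N\big[-Jx^2+\tfrac1N\log\Xi_N((2x-1)J+h)\big]\big)\,dx$, and a Laplace estimate --- using the locally uniform convergence $\tfrac1N\log\Xi_N(a)\to\poo(a)$, the continuity of $\poo$, and the growth bounds above to dominate the Gaussian tails --- gives
\[
\lim_{N\to\infty}\tfrac1N\log Z_N\,=\,\sup_{x\in\R}\big(-Jx^2+\poo((2x-1)J+h)\big)\,=\,\sup_m\widetilde p(m)\,,
\]
by \eqref{tildep}. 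The case $J=0$ is immediate since then $Z_N=\Xi_N(h)$ and $\widetilde p\equiv\poo(h)$.

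\smallskip

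The step I expect to be most delicate is the rigorous, uniform control in the two Laplace arguments: handling the boundary terms $k\in\{0,\lfloor N/2\rfloor\}$ (equivalently $m\in\{0,1\}$), where Stirling degenerates; and extracting enough uniformity in $a$ from $\tfrac1N\log\Xi_N(a)\to\poo(a)$ to both bound $Z_N$ above by $e^{N(\sup_m\widetilde p(m)+\eps)}$ over all of $\R$ and bound it below by restricting the integral to a neighbourhood of a maximiser of $x\mapsto-Jx^2+\poo((2x-1)J+h)$. As an independent check one may bypass Hubbard--Stratonovich and apply Stirling directly to $Z_N=\sum_k\frac{N!}{(N-2k)!\,k!\,2^k}N^{-k}e^{N((h-J)m+Jm^2)}$ with $m=1-2k/N$, obtaining $\lim_N\tfrac1N\log Z_N=\sup_{m\in[0,1]}\phi(m)$ with $\phi(m):=I(m)+(h-J)m+Jm^2$; then completing the square inside the definition of $\poo$ gives the identity $\widetilde p(m)=\sup_{m'}\big(\phi(m')-J(m-m')^2\big)$, and since $J\ge0$ this forces $\sup_m\widetilde p(m)=\sup_{m'}\phi(m')$, reconciling the two routes.
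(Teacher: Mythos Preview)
Your argument is correct. Note, however, that the paper does not actually prove Theorem~\ref{mainpressure}: it is quoted from \cite{ACM}, with the remark ``We refer to the original work for the details.'' So there is no proof in the present paper to compare against directly.

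That said, your two steps are precisely the ones the paper deploys elsewhere, though with one variation. For step (ii), the Hubbard--Stratonovich linearisation of $Jm_N^2$ is exactly the ``Gaussian convolution method'' the paper invokes in Section~3 (Lemma~\ref{lemma.2}) for the fluctuation results; your identity $Z_N=\sqrt{NJ/\pi}\int_\R e^{-NJx^2}\,\Xi_N((2x-1)J+h)\,dx$ is equivalent to the paper's equation~\eqref{tesi.lemma.2}--\eqref{funzione.F}. For step (i), you compute $\lim_N\tfrac1N\log\Xi_N(a)=\poo(a)$ by counting matchings and Stirling, which is the large-deviations route sketched in Remark~\ref{genMF} (your $I(m)$ is, up to sign and an additive constant, the rate function \eqref{rate}). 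The paper instead obtains the same limit---and indeed the sharper asymptotic \eqref{asymzre}---via the Gaussian integral representation of Proposition~\ref{prop: gaussrepd}, $\Xi_N(a)=\sqrt{N/2\pi}\int_\R(x+e^a)^N e^{-Nx^2/2}\,dx$, followed by the extended Laplace Theorem~\ref{th: main}. Your combinatorial route is more elementary and entirely adequate for the pressure (only $o(N)$ control is needed), whereas the paper's integral representation pays off later when exact prefactors are required for the CLT. Your ``independent check'' at the end---Stirling applied directly to $Z_N$, yielding $\sup_m\phi(m)$ and then the identity $\sup_m\widetilde p(m)=\sup_{m'}\phi(m')$ via $\widetilde p(m)=\sup_{m'}(\phi(m')-J(m-m')^2)$---is a clean self-contained alternative that bypasses the Gaussian convolution altogether.

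The caveats you flag (boundary terms in Stirling, uniformity in the Laplace step) are the right ones; the paper handles the analogous uniformity via Proposition~\ref{uniformlycon} (Heilmann--Lieb zeros plus Vitali), which you could cite or replace by the convexity/Lipschitz arguments implicit in your growth bounds.
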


\begin{remark}\label{genMF}
We notice that, in analogy with magnetic models, one can define a general mean-field Hamiltonian as
\be\label{meanfhami}
-H^{\textup{\tiny{MF}}}_N(D)=N\,f\big(m_N(D)\big)
\ee
for any bounded continuous function $f$. As in the case of spin mean-field models, using standard Large Deviations techniques, one can prove that
\be\label{boh}
\lim_{N\to\infty}\frac{1}{N}\log Z_N^{\textup{\tiny{MF}}}=\sup_{m}\big( f(m)-I(m)\big)
\ee
where the rate function $I$ is given by
\be\label{rate}
I(z)=\begin{cases}
z\log z+\frac{1-z}{2}\log(1-z) + \frac{1-z}{2}-\poo(0)& \text{if } z\in [0,1]\\
\infty & \text{otherwise}.
\end{cases}
\ee
\end{remark}

The aim of the present work is to describe the limiting distribution of the random variable $S_N$ with respect to
the measure $\mu_N$, in a suitable scaling when $N\to\infty\,$.
From now on $\delta_x$ is the Dirac measure centered at $x$, $\mathcal{N}\left(m,\sigma^2\right)$ denotes the normal distribution with mean $m$ and variance $\sigma^2$ and $\overset{\mathcal{D}}{\rightarrow}$ denotes the convergence in distribution with respect to the Gibbs measure $\mu_N$, as $N\to\infty\,$.

Let start by considering the case  $J=0$. The Hamiltonian \eqref{hami} at $J=0$ is a special case of the original problem considered by Heilmann and Lieb \cite{HL}. We introduce the following notation,
\be\label{parclass}
\Zcl\equiv Z_N\Big|_{J=0},\qquad\pcl\equiv p_{N}\Big|_{J=0},\qquad\mucl\equiv\mu_N\Big|_{J=0} \;.
\ee
Setting $J=0$ in Theorem \ref{mainpressure} we get
\begin{equation} \label{eq: p classic limit}
\lim_{N\to\infty}\pcl \,=\, \poo(h) \quad\forall\,h\in\R \;.
\end{equation}

Thus, the pressure is analytic as a function of $h$  and the unique value of the limiting monomer density is  given by
\be\label{freedensity}
\lim_{N\to\infty}\E_{\mucl}(m_{N}) \,=\, \lim_{N\to\infty}\frac{\partial}{\partial h}\pcl \,=\,
\frac{\partial}{\partial h}\poo
\ee
and, using the properties of $g$ given in \cite{ACM}, we get
\be\label{gprop}
\frac{\partial}{\partial h}\poo \,=\, g(h).
\ee

\begin{theorem}\label{teo1}
For the IMD model at $J=0$ the followings hold:
\begin{equation}\label{LLN0}
m_{N} \,\overset{\mathcal{D}}{\rightarrow}\, \delta_{g(h)}
\end{equation}
and
\begin{equation}\label{CLT0}
\frac{S_{N}-N\,g(h)}{\sqrt{N}} \,\overset{\mathcal{D}}{\rightarrow}\, \mathcal{N}\left(0,\,\frac{\partial}{\partial h}g(h)\right) \;.
\end{equation}
\end{theorem}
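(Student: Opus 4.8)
The plan is to work with the moment generating function of $S_N$ under $\mucl$ and show it converges, after the appropriate rescaling, to the moment generating function of the claimed limits. Concretely, for $t\in\R$ write
\be
\E_{\mucl}\big(e^{t S_N}\big) \,=\, \frac{1}{\Zcl}\sum_{D\in\DD_N} N^{-|D|}\, e^{-H_N(D)\big|_{J=0}}\, e^{t S_N(D)}.
\ee
Since $-H_N(D)\big|_{J=0} = h\,S_N(D)$ and $S_N(D)$ is exactly the statistic appearing in the Hamiltonian, inserting $e^{tS_N(D)}$ simply shifts the external field $h\mapsto h+t$. Hence
\be
\E_{\mucl}\big(e^{t S_N}\big) \,=\, \frac{Z_N^{(0)}(h+t)}{Z_N^{(0)}(h)} \,=\, \exp\!\Big(N\big(\pcl(h+t)-\pcl(h)\big)\Big).
\ee
This identity is the heart of the argument: the pure hard-core model has the self-reproducing property that tilting by $S_N$ stays in the same family.

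For the law of large numbers \eqref{LLN0} I would apply this with $t$ replaced by $t/N$ (to study $m_N$): $\E_{\mucl}(e^{(t/N)S_N}) = \exp(N(\pcl(h+t/N)-\pcl(h)))$. Using \eqref{eq: p classic limit}, namely $\pcl \to \poo$ uniformly on compacts (which follows from the convexity of $\pcl$ in $h$ together with pointwise convergence, Theorem~\ref{mainpressure} at $J=0$), the exponent converges to $t\,\frac{\partial}{\partial h}\poo(h) = t\,g(h)$ by \eqref{gprop}. Thus the moment generating function of $m_N$ converges to $e^{tg(h)}$, the m.g.f.\ of $\delta_{g(h)}$, giving \eqref{LLN0} by the standard m.g.f.\ convergence theorem (the limit being finite in a neighbourhood of $0$).

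For the central limit theorem \eqref{CLT0} I would take $t\mapsto t/\sqrt N$ and center: with $W_N := (S_N - Ng(h))/\sqrt N$,
\be
\E_{\mucl}\big(e^{tW_N}\big) \,=\, \exp\!\Big(N\big(\pcl(h+t/\sqrt N)-\pcl(h)\big) - \sqrt N\, t\, g(h)\Big).
\ee
A second-order Taylor expansion of $N\,\pcl(h+t/\sqrt N)$ around $h$ gives $N\pcl(h) + \sqrt N\,t\,\pcl'(h) + \tfrac{t^2}{2}\pcl''(h) + o(1)$, so the exponent is $\sqrt N\,t\,(\pcl'(h)-g(h)) + \tfrac{t^2}{2}\pcl''(h) + o(1)$. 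Since $\pcl'(h)\to g(h)$ and $\pcl''(h)\to \frac{\partial}{\partial h}g(h)$ as $N\to\infty$, provided the $\sqrt N$-prefactor term genuinely vanishes, this converges to $\exp\big(\tfrac{t^2}{2}\,\frac{\partial}{\partial h}g(h)\big)$, the m.g.f.\ of $\mathcal N(0,\frac{\partial}{\partial h}g(h))$.

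The main obstacle — and the reason the Gaussian representation of \cite{ACM2} and the extended Laplace method in the Appendix are needed — is making the Taylor expansion rigorous with \emph{uniform} control of the error and, in particular, showing $\pcl'(h)\to g(h)$ \emph{and} $\pcl''(h)\to g'(h)$ with errors $o(1/\sqrt N)$ and $o(1)$ respectively, so that the dangerous $\sqrt N\,t\,(\pcl'(h)-g(h))$ term does not blow up. Pointwise convergence of pressures (Theorem~\ref{mainpressure}) is far too weak for this; one needs a sharp asymptotic expansion $\pcl(h) = \poo(h) + \tfrac{1}{N}c(h) + o(1/N)$ with control on two derivatives. This is precisely where one rewrites $\Zcl$ via the Gaussian (Hubbard--Stratonovich-type) representation that decouples the hard-core constraint, turning $\Zcl$ into a one-dimensional integral $\int e^{N\phi_N(x)}\,dx$ to which a Laplace-type expansion applies; the non-degeneracy of the maximizer of the limiting exponent (equivalently $g'(h) > 0$, a property of $g$ recorded in \cite{ACM}) guarantees the Gaussian fluctuation and yields the needed two-derivative expansion. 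I would therefore carry out the argument in three steps: (i) establish the tilting identity and reduce \eqref{LLN0}, \eqref{CLT0} to asymptotics of $\pcl$ and its first two $h$-derivatives; (ii) invoke the Gaussian representation to express $\Zcl$ as a one-dimensional Laplace integral; (iii) apply the extended Laplace method of the Appendix at the non-degenerate maximizer to obtain the expansion with the required uniformity, then assemble the m.g.f.\ limits and conclude.
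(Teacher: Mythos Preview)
Your proposal is correct in its overall architecture --- moment generating function via the tilting identity $\E_{\mucl}(e^{tS_N})=\Zcl(h+t)/\Zcl(h)$, then Gaussian representation plus the extended Laplace method --- and this is exactly what the paper does. There is, however, a small but instructive difference in the order of operations that is worth flagging.

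You Taylor-expand $\pcl$ first, which produces the ``dangerous'' term $\sqrt{N}\,t\,(\pcl'(h)-g(h))$, and then propose to kill it by extracting a two-term expansion $\pcl(h)=\poo(h)+\tfrac{1}{N}c(h)+o(1/N)$ \emph{with control on two $h$-derivatives}. The Appendix theorem (Theorem~\ref{th: main}) as stated gives the Laplace asymptotic for each fixed integrand, not uniformity in the external parameter $h$ or differentiability of the error; turning it into the derivative bounds you need is additional work. The paper sidesteps this entirely: it applies the extended Laplace method directly to $\Zcl(h+t/N^\eta)$ --- the whole point of allowing $\psi_n$ to depend on $n$ in Theorem~\ref{th: main} --- obtaining (Proposition~\ref{asympzeta})
\[
\Zcl\big(h+\tfrac{t}{N^\eta}\big)\ \underset{N\to\infty}{\sim}\ \exp\!\Big(N\,\poo\big(h+\tfrac{t}{N^\eta}\big)\Big)\sqrt{\tfrac{1}{2-g(h)}}\,,
\]
so that the ratio is $\sim\exp\big(N(\poo(h+t/N^\eta)-\poo(h))\big)$ and one Taylor-expands the \emph{fixed} analytic function $\poo$. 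The dangerous term never appears. Your route would work once you justify the derivative control (e.g.\ via analyticity in $h$ and Cauchy estimates, in the spirit of Proposition~\ref{uniformlycon}), but the paper's ordering is the cleaner execution of the same idea.
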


We notice that, even if we are in the case $J=0$, \eqref{CLT0} is not a  consequence
of the classical central limit theorem, indeed $S_N$ is not a sum of i.i.d. random variables because of the presence of the hard-core interaction. The proof of the theorem is in the next section.

Let us consider now the case $J>0$. It is possible to show \cite{ACM} that the points where the function $\widetilde p$ defined in \eqref{tildep} reaches its maximum satisfy the following consistency equation:
\begin{equation}\label{consequation}
m=g((2m-1)J+h).
\end{equation}
The analysis of  \eqref{consequation} allows to identify the region where there exists a unique global maximum point $m^*(h,J)$ of $\widetilde p$. The resulting picture (see figure \ref{fig.3.3})  is the following: the function $m^*$ is single-valued and continuous on the plane $(h,J)$ with the exception of an open curve $\gamma$ defined by an implicit equation $h=\gamma(J)$. Moreover $m^*$ is smooth outside $\gamma$ union its endpoint $(h_c,J_c)$.
Instead on $\gamma$, there are two global maximum points $m_1(J)= m_1(\gamma(J),J)$ and $m_2(J)=m_2(\gamma(J),J)$. In particular, $m_1<m_2$ thus they represents respectively the \textit{dimer} and the \textit{monomer phase}. The curve $\gamma$ plays a crucial physical role since it represents the coexistence of two different thermodynamic phases  and the point $(h_c,J_c)$ is the \textit{critical point} of the system, whose exact value is computed in \cite{ACM}.

\begin{figure}[h!]\label{fig.3.3}
\centering
\includegraphics[scale=0.3]{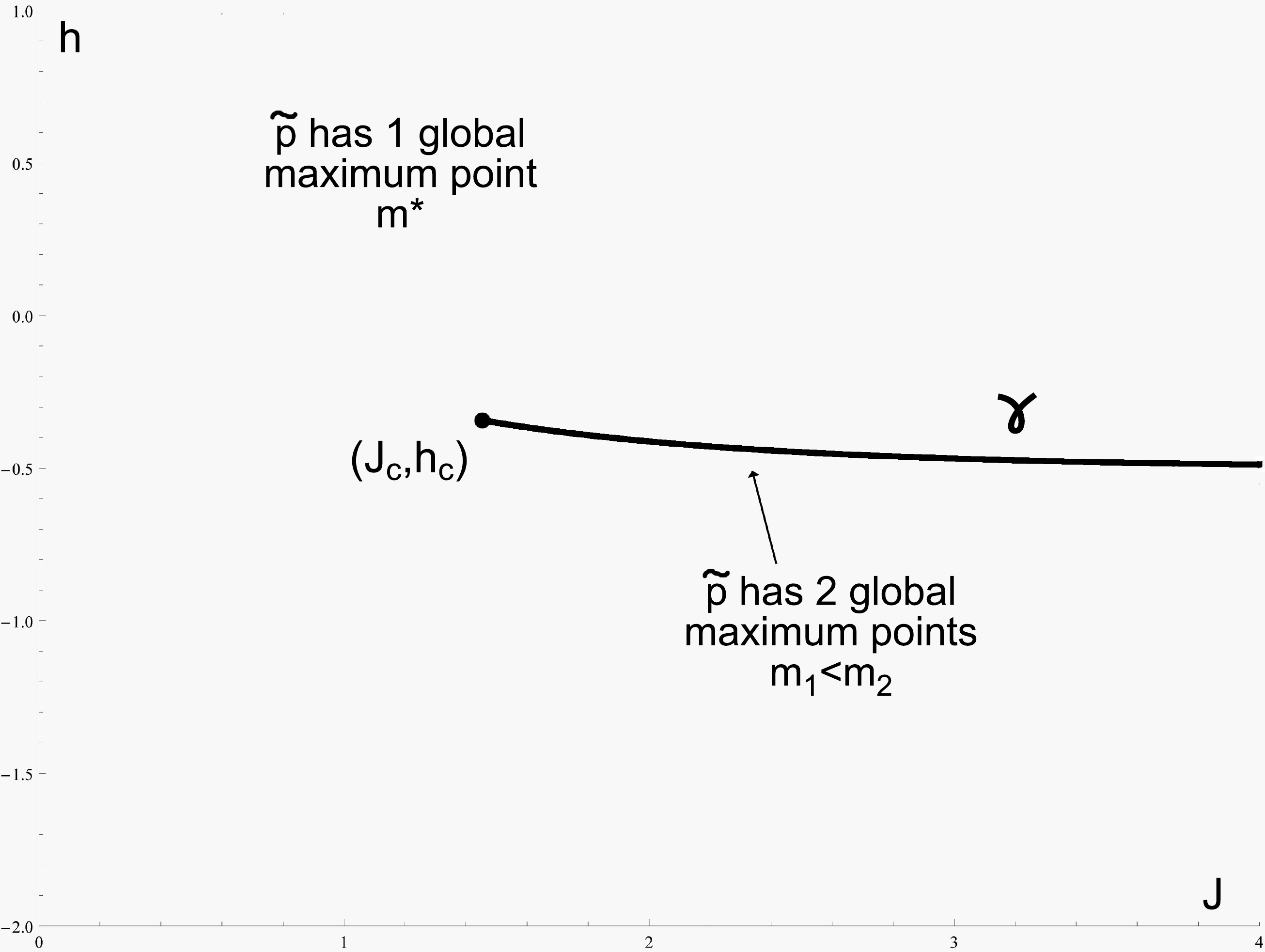}
\caption{The coexistence curve $\gamma$ and the critical point $(h_c,J_c)$  in the plane $(h,J)$.}
\end{figure}

Outside of $\gamma$, by differentiating the expression \eqref{termo_limit} with respect to the external field $h$, one obtains that the value $m^*$ maximizing $\widetilde p$ is the limit of the average monomer density $m_N=S_N/N$ with respect to the Gibbs measure:
\begin{equation}\label{firsmoment}
\lim_{N\rightarrow\infty}\E_{\mu_{N}}(m_{N}) \,=\, \lim_{N\rightarrow\infty}\frac{\partial}{\partial h} p_N \,=\,
\frac{\d}{\d h} \widetilde p(m^*) \,=\, m^*(h,J).
\end{equation}
We want to stress the fact  that in the standard mean-field ferromagnetic model (Curie-Weiss model) the existence of the limiting magnetization on the coexistence curve (zero external field) is achieved by a spin flip symmetry argument, a property that we do not have in the present case.

In the next sections we will prove the law of large numbers, the central limit theorem and their breakdowns, respectively  theorems \ref{teorema.1} and \ref{teorema.2} below, for the distribution of $S_N$ (suitable normalised) with respect to the  Gibbs measure $\mu_N$.

\begin{theorem}\label{teorema.1} Consider the IMD model defined by the Hamiltonian \eqref{hami}.
\begin{itemize}
\item[i)] In the uniqueness region $(h,J)\in\R\times\R^+\setminus\gamma$, we have that
\begin{equation}\label{pequal1}
	m_{N}\overset{\mathcal{D}}{\rightarrow}\delta_{m^*} \;.
\end{equation}
\item[ii)] On the coexistence curve $\gamma$, we have that
\begin{equation}\label{pequal2}
	m_{N}\overset{\mathcal{D}}{\rightarrow} \rho_1\,\delta_{m_{1}}+ \rho_2\,\delta_{m_{2}} \;,
\end{equation}
where $\rho_l=\frac{b_{l}}{b_{1}+b_2}\,$, $b_{l}=(-\lambda_l(2-m_l))^{-1/2}\,$  and
$\lambda_l=\frac{\partial^2}{\partial m^2}\widetilde{p}(m_l)\,$, for $l=1,2$.
\end{itemize}
\end{theorem}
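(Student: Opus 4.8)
The plan is to linearise the quadratic interaction by a Gaussian convolution (the Hubbard--Stratonovich transform, as in the method of \cite{ellis1980limit}), reducing everything to the $J=0$ model whose sharp asymptotics are established in Section~2, and then to read off the limiting law by a Laplace-type analysis. Concretely, using $e^{NJm_N^2}=\sqrt{NJ/\pi}\int_\R e^{-NJx^2+2NJx\,m_N}\,dx$ and the identity $\sum_{D\in\DD_N}N^{-|D|}\psi(m_N(D))\,e^{N h' m_N(D)}=\Zcl(h')\,\E_{\mucl(h')}[\psi(m_N)]$, one obtains, for every bounded continuous $\phi$,
\be\label{HSrep}
\E_{\mu_N}\big[\phi(m_N)\big]=\frac{\displaystyle\int_\R e^{-NJx^2}\,\Zcl(h-J+2Jx)\;\E_{\mucl(h-J+2Jx)}\big[\phi(m_N)\big]\,dx}{\displaystyle\int_\R e^{-NJx^2}\,\Zcl(h-J+2Jx)\,dx}\,.
\ee
The two ingredients needed from Section~2, both locally uniform in $h'$, are the sharp estimate $\Zcl(h')=A(h')\,N^{-1/2}e^{N\poo(h')}(1+o(1))$ with an explicit, continuous, strictly positive prefactor $A$ (a local-limit refinement of Theorem~\ref{teo1}, obtained through the Gaussian representation of \cite{ACM2} that decouples the hard-core constraint), and the law of large numbers $\E_{\mucl(h')}[\phi(m_N)]\to\phi(g(h'))$.

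Inserting these into \eqref{HSrep} and using $\poo(h-J+2Jx)-Jx^2=\widetilde p(x)$ turns the ratio into a Laplace integral with phase $\widetilde p$. Since $\tfrac{d}{dh}\poo=g$, the stationarity condition $\widetilde p\,'(x)=2J\big(g(h-J+2Jx)-x\big)=0$ is exactly the consistency equation \eqref{consequation}, so the asymptotics is governed by the global maximisers of $\widetilde p$; the Gaussian factor forces $\widetilde p(x)\to-\infty$, so the $x$-integral concentrates on a compact set and the interchange of limit and integral is justified by isolating a neighbourhood of the maximiser(s) and bounding $\Zcl$ crudely from above on the complement (the extended Laplace method of the Appendix). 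In the uniqueness region $\widetilde p$ has a single global maximiser $m^*$, and Laplace's method --- which here only requires concentration and hence applies even at the degenerate critical point --- gives $\E_{\mu_N}[\phi(m_N)]\to\phi(g(h-J+2Jm^*))=\phi(m^*)$ by \eqref{consequation}, i.e.\ \eqref{pequal1}. On $\gamma$ there are exactly two global maximisers $m_1<m_2$, non-degenerate away from $(h_c,J_c)$, with $\widetilde p(m_1)=\widetilde p(m_2)$, and the two-point Laplace expansion yields $\E_{\mu_N}[\phi(m_N)]\to\sum_{l=1,2}\rho_l\,\phi(m_l)$ with $\rho_l$ proportional to $A(h-J+2Jm_l)\,\big|\widetilde p\,''(m_l)\big|^{-1/2}$.

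It remains to match the constants. Writing $\lambda_l=\widetilde p\,''(m_l)<0$, the Laplace factor contributes $(-\lambda_l)^{-1/2}$, while the prefactor $A$ of the $J=0$ estimate carries, by the local-limit analysis (note $g'(h')=2g(1-g)/(2-g)$), precisely a factor $(2-g(h'))^{-1/2}$; evaluated at $x=m_l$ this is $(2-m_l)^{-1/2}$. Hence $\rho_l\propto(-\lambda_l(2-m_l))^{-1/2}=b_l$, and after normalisation $\rho_l=b_l/(b_1+b_2)$, which is \eqref{pequal2}. I expect the genuinely hard part to be the one packaged above as ``the sharp estimate for $\Zcl$'': because of the hard-core constraint $S_N$ is not a sum of independent variables even at $J=0$, so pinning down the prefactor $A$ locally uniformly in $h'$ requires the Gaussian representation of \cite{ACM2} together with a careful (extended) Laplace analysis; granting that, the remainder is a routine two-scale Laplace computation and the uniform tail estimate for the $x$-integral.
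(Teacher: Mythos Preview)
Your approach is essentially the paper's --- Gaussian linearisation of the quadratic interaction, reduction to the $J=0$ model whose sharp asymptotics come from Proposition~\ref{asympzeta}, and a Laplace analysis governed by $\widetilde p$ --- and the identification of the weights $b_l=(-\lambda_l(2-m_l))^{-1/2}$, with the $(2-m_l)^{-1/2}$ coming from the $J=0$ prefactor and $(-\lambda_l)^{-1/2}$ from the Laplace expansion, is exactly right. There is one implementation difference worth flagging. You apply the Hubbard--Stratonovich identity directly to $\E_{\mu_N}[\phi(m_N)]$, which produces the integrand $e^{N F_N(x)}\,\E_{\mucl(h-J+2Jx)}[\phi(m_N)]$ with an $N$-dependent test function; to pass to the limit after localising near $\mu_l$ you therefore need the $J=0$ law of large numbers at the moving field $h-J+2J\mu_l+O(N^{-1/(2k)})$, which is a hair more than Theorem~\ref{teo1} as stated, though it does follow from the same Proposition~\ref{asympzeta}. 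The paper sidesteps this by convolving $m_N$ with an independent Gaussian $W/\sqrt N$ (Lemma~\ref{lemma.2}): the law of $m_N+W/\sqrt N$ has the explicit density $C_N\,e^{N F_N(x)}$, so the Laplace step involves only a \emph{fixed} test function $\psi$, and the vanishing perturbation $W/\sqrt N$ is removed afterwards via Lemma~\ref{lemma.1}. This is the original Ellis--Newman bookkeeping trick; it buys a clean separation between the Laplace analysis and the $J=0$ LLN, whereas your variant works too but carries the extra (easy) obligation of making the LLN stable under $o(1)$ shifts of the external field. One minor slip: the prefactor in your sharp estimate for $\Zcl(h')$ is $(2-g(h'))^{-1/2}$ with no $N^{-1/2}$ (Proposition~\ref{asympzeta}); it cancels in the ratio anyway.
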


\begin{remark}\label{www}
We notice that, on the contrary of what happens for the Curie-Weiss model, the statistical weights $\rho_{1}$ and $\rho_{2}$  on the coexistence curve  are in general different, furthermore they are not simply given in terms of the second derivative of the variational pressure $\widetilde p\,$.

\begin{figure}[h!]\label{fig.3.4}
\centering
\includegraphics[scale=0.3]{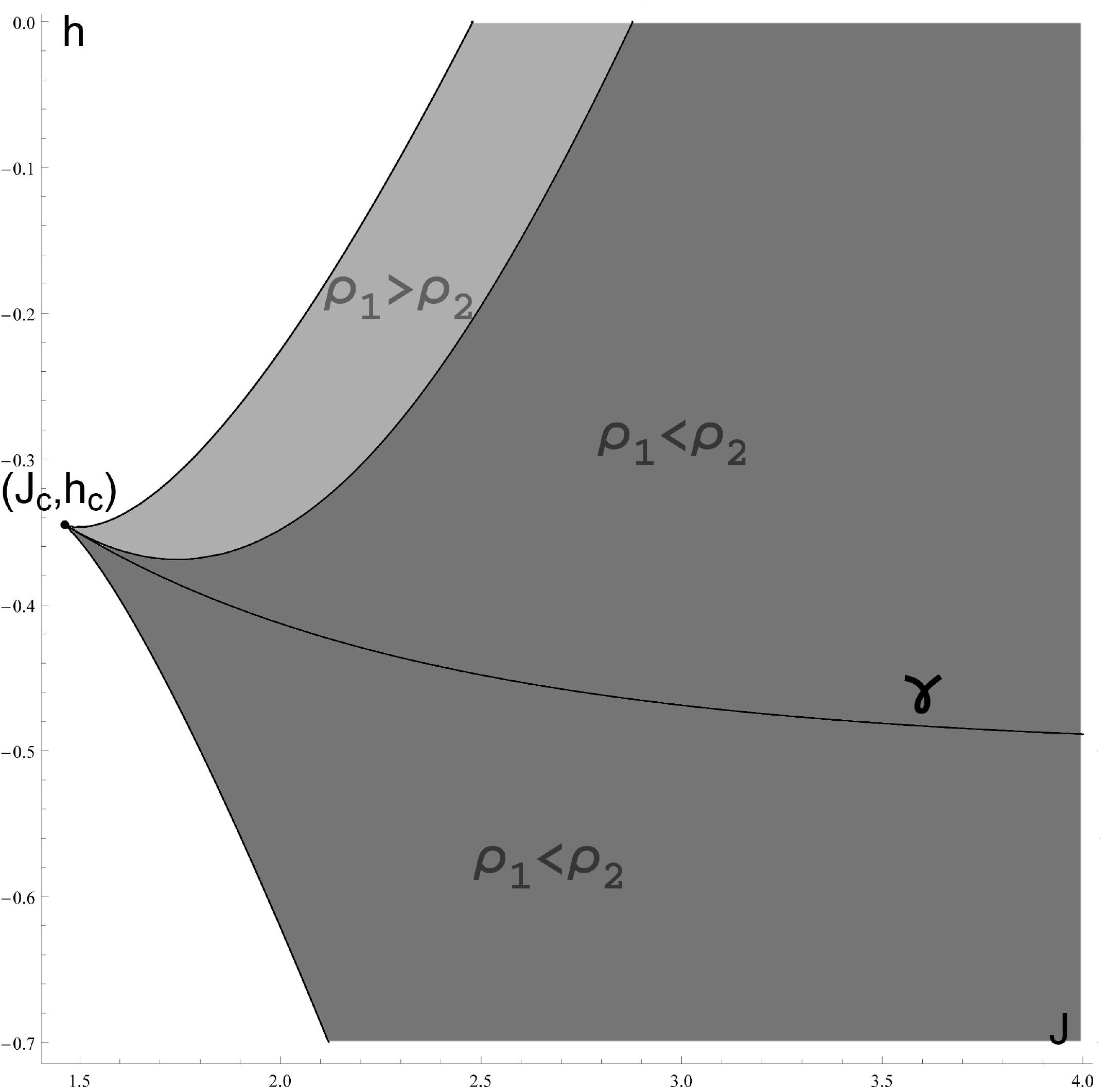}
\caption{ We extend the definition of $\rho_1$ and $\rho_2$ on a region that contains $\gamma$, where $m_1$ and $m_2$ are  local maximum points of $\widetilde{p}$, and then we compute the sign of $\rho_1-\rho_2\,$.  The coexistence curve appears to be  completely contained in the region  $\rho_1 < \rho_2$. }
\end{figure}

The first fact can be seen  numerically (figure \ref{fig.3.4}) and analytically one can compute
\begin{equation}\label{boundgamma}
\lim_{J\to \infty}\frac{\rho_1(J)}{\rho_2(J)}=\frac{1}{\sqrt 2} \;.
\end{equation}
Indeed, by exploiting the formula $(p^{(0)})'' = g' = 2g(1-g)/(2-g)\,$ (see Appendix of \cite{ACM}), one can rewrite the ratio $\rho_1/\rho_2$ as
\eq \label{eq: p''}
\frac{\rho_1}{\rho_2}\,=\,\sqrt{ \frac{(2-m_2) - 4J\,m_2\,(1-m_2)}{(2-m_1) - 4J\,m_1\,(1-m_1)}} \;.
\eeq

Furthermore, the relative weights $\rho_l$ have two contributions reflecting the presence of two different kind of interaction: the first contribution $\lambda_l$ is given by the second derivative of the variational pressure \eqref{tildep}, while the second contribution $2-m_l$ comes from the second derivative of the pressure of the pure hard-core model.
\end{remark}


\begin{theorem}\label{teorema.2} Consider the IMD model defined by the Hamiltonian \eqref{hami}.
\begin{itemize}
\item[i)] For $(h,J)\in\big(\R\times\R^+\big)\setminus\big(\gamma\cup(h_c,J_c)\big)$, we have
\begin{equation}\label{CL1}
\frac{S_{N}-N m^*}{N^{1/2}} \,\overset{\mathcal{D}}{\rightarrow}\, \mathcal{N}\Big(0,\sigma^2\Big)
\end{equation}
where  $\sigma^2=-\lambda^{-1}-(2J)^{-1}>0\,$ and $\lambda=\frac{\partial^2}{\partial m^2}\widetilde{p}(m^*)<0$.
\item[ii)] At the critical point $(h_c,J_c)$, we have
\begin{equation}\label{CLc}
	\frac{S_{N}-N m_c}{N^{3/4}} \,\overset{\mathcal{D}}{\rightarrow}\, C\,\exp\bigg(\dfrac{\lambda_c}{24}\,x^{4}\bigg)dx
	\end{equation}
where $\lambda_c=\frac{\partial^4}{\partial m^4}\widetilde{p}(m_c)<0$, $m_c\equiv m^*(h_c,J_c)$
and $C^{-1}=\int_{\R}\exp (\frac{\lambda_c}{24}x^{4})dx$.
\end{itemize}
\end{theorem}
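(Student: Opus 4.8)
\emph{Proof proposal.}
The plan is to combine the Gaussian convolution method with the control of the pure hard-core model provided by Theorem~\ref{teo1} (and by the Gaussian representation of \cite{ACM2} on which its proof is based). In either case the candidate limiting law has a moment generating function that is finite in a neighbourhood of the origin (Gaussian in i); a density proportional to $\exp(\tfrac{\lambda_c}{24}x^4)$ in ii)), and for every fixed $N$ the rescaled variable is bounded, so it suffices to prove convergence of moment generating functions, which then yields convergence in distribution. Fix $\tau\in\R$, a scaling exponent $a\in\{\tfrac12,\tfrac34\}$ and a centering $\bar m\in\{m^*,m_c\}$. Starting from
\begin{equation}
\E_{\mu_N}\!\big[e^{\tau(S_N-N\bar m)/N^{a}}\big]=e^{-\tau N^{1-a}\bar m}\,\frac{1}{Z_N}\sum_{D\in\DD_N}N^{-|D|}\,e^{N\left((h-J)m_N(D)+J\,m_N(D)^2\right)+\tau N^{1-a}m_N(D)}\,,
\end{equation}
one decouples the attractive term by $e^{NJ m^2}=\sqrt{NJ/\pi}\int_\R e^{-NJs^2+2NJsm}\,ds$ and recognises the resulting inner sum as the pure monomer-dimer partition function $\Zcl(h-J+2Js+\tau N^{-a})=\exp\!\big(N\pcl(h-J+2Js+\tau N^{-a})\big)$ at the displayed external field. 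Hence $\E_{\mu_N}[e^{\tau(S_N-N\bar m)/N^{a}}]=e^{-\tau N^{1-a}\bar m}\,I_N(\tau N^{-a})/I_N(0)$, where $I_N(b):=\int_\R\exp(N\phi_N(s;b))\,ds$ and $\phi_N(s;b):=-Js^2+\pcl(h-J+2Js+b)$. By \eqref{eq: p classic limit}, $\phi_N(\cdot;0)\to\widetilde p$; moreover $\pcl$ and its first derivatives converge at a fixed field---with the quantitative rates extracted in Section~2 via the Gaussian representation of \cite{ACM2}---to $\poo$ and its derivatives, i.e.\ to $g,g',g'',\dots$ (recall $(\poo)'=g$).

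For part i) fix $(h,J)\in(\R\times\R^+)\setminus\big(\gamma\cup\{(h_c,J_c)\}\big)$, so that $\widetilde p$ has a unique global maximum at $m^*$ and, with $\widetilde h^*:=(2m^*-1)J+h$, $\lambda=\widetilde p''(m^*)=2J\big(2J g'(\widetilde h^*)-1\big)<0$. Localising the integrals by $s=m^*+u/\sqrt N$ and expanding $N\phi_N(m^*+u/\sqrt N;\tau/\sqrt N)$ in powers of $N^{-1/2}$: the $O(N)$ term is a constant cancelling in $I_N(\tau N^{-a})/I_N(0)$; the $O(\sqrt N)$ term vanishes because $m^*=g(\widetilde h^*)$ solves the consistency equation \eqref{consequation}, up to a leftover $\sqrt N\,m^*\tau$ that is absorbed by the prefactor $e^{-\tau\sqrt N m^*}$; and the $O(1)$ term equals $\tfrac{\lambda}{2}u^2+2J g'(\widetilde h^*)\,u\tau+\tfrac12 g'(\widetilde h^*)\tau^2$. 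Carrying out the Gaussian $u$-integral (legitimate since $\lambda<0$) in numerator and denominator gives $\exp\!\big(\tfrac{\tau^2}{2}\sigma^2\big)$ with $\sigma^2=g'(\widetilde h^*)-4J^2g'(\widetilde h^*)^2/\lambda$, which using $\lambda-4J^2 g'(\widetilde h^*)=-2J$ simplifies to $-\lambda^{-1}-(2J)^{-1}$, positive because $2J+\lambda=4J^2 g'(\widetilde h^*)>0$. This is \eqref{CL1}.

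For part ii), at $(h_c,J_c)$ the maximiser $m_c$ is the point where the solutions of \eqref{consequation} coalesce, so (see \cite{ACM}) $\widetilde p''(m_c)=\widetilde p'''(m_c)=0$ while $\widetilde p''''(m_c)=\lambda_c<0$; equivalently, with $\widetilde h_c:=(2m_c-1)J_c+h_c$, one has $g'(\widetilde h_c)=1/(2J_c)$ and $g''(\widetilde h_c)=0$. Now the correct localisation is $s=m_c+u/N^{1/4}$, and expanding $N\phi_N(m_c+u/N^{1/4};\tau N^{-3/4})$: the $O(N)$ term cancels in the ratio; the $O(N^{3/4})$ term vanishes by \eqref{consequation}, up to $N^{1/4}m_c\tau$ absorbed by $e^{-\tau N^{1/4}m_c}$; the $O(N^{1/2})$ coefficient of $u^2$ vanishes because $\widetilde p''(m_c)=0$; the $O(N^{1/4})$ coefficient of $u^3$ vanishes because $g''(\widetilde h_c)=0$; and the surviving $O(1)$ term is $u\tau+\tfrac{\lambda_c}{24}u^4$, the quartic coefficient being $\tfrac23 J_c^4 g'''(\widetilde h_c)=\lambda_c/24$. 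Dividing numerator by denominator yields $\E_{\mu_N}[e^{\tau(S_N-Nm_c)/N^{3/4}}]\to C\int_\R e^{\tau x}\exp\!\big(\tfrac{\lambda_c}{24}x^4\big)\,dx$ with $C^{-1}=\int_\R\exp(\tfrac{\lambda_c}{24}x^4)\,dx$, i.e.\ exactly the moment generating function of the density in \eqref{CLc}; hence \eqref{CLc}.

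The main obstacle is to turn the formal Laplace expansions above into rigorous asymptotics. This requires: (a) localising the integrals $I_N(\cdot)$ to a shrinking neighbourhood of the maximiser, which rests on $\widetilde p$ having there a strict global maximum together with a priori upper bounds on $\phi_N$ that discard the complementary region at an exponentially small cost; (b) controlling, uniformly for $u$ in the slowly growing range that effectively contributes, the Taylor remainders of $\pcl$ around $\widetilde h^*$ (resp.\ $\widetilde h_c$)---precisely the quantitative control of the pure pressure and its first derivatives obtained in Section~2; and (c) in case ii), carrying out the degenerate, quartic Laplace asymptotics, for which the extended Laplace method of the Appendix is designed. The structural fact $\widetilde p''(m_c)=\widetilde p'''(m_c)=0$---that the critical point is the coalescence of three solutions of \eqref{consequation}---is exactly what makes the anomalous exponent $3/4$ and the quartic-exponential density emerge.
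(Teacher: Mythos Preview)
Your proposal is correct and reaches the same conclusions, but it is organised differently from the paper's proof. The paper follows the Ellis--Newman scheme literally: it introduces an independent Gaussian $W\sim\mathcal N(0,(2J)^{-1})$, shows (Lemmas~\ref{lemma.1},~\ref{lemma.2}) that $W/N^{1/2-\eta}+(S_N-N u)/N^{1-\eta}$ has density proportional to $\exp(N F_N(x N^{-\eta}+u))$ with $F_N(x)=-Jx^2+\pcl(2Jx+h-J)$, proves weak convergence of this density by a Laplace argument (Propositions~\ref{cutoff} and~\ref{prop: DCT}), and finally \emph{de-convolves} via Lemma~\ref{lemma.1}. In case i) this subtraction of $W$ is what produces the $-(2J)^{-1}$ correction to the variance; in case ii) the auxiliary term $W/N^{1/4}$ simply vanishes in the limit.

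You instead bypass the auxiliary Gaussian variable by computing the moment generating function directly: the same Hubbard--Stratonovich identity yields the ratio $I_N(\tau N^{-a})/I_N(0)$, and your Laplace expansion carries the $\tau$-dependence through the integral so that, after the Gaussian (resp.\ quartic) $u$-integral, the variance $-\lambda^{-1}-(2J)^{-1}$ (resp.\ the density $C\exp(\lambda_c x^4/24)$) drops out without any subtraction step. The computations you sketch are accurate, including the cancellation identities $\lambda-4J^2 g'(\widetilde h^*)=-2J$ in i) and $g'(\widetilde h_c)=1/(2J_c)$, $g''(\widetilde h_c)=0$ in ii).

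The rigorous ingredients you list in your last paragraph---exponential localisation away from the maximiser, uniform control of $\pcl$ and its derivatives on compacta, and a Laplace method allowing $N$-dependence in the integrand---are precisely Propositions~\ref{cutoff}, \ref{uniformlycon}, \ref{prop: DCT} and the Appendix theorem, so the paper already supplies everything you need. What your route buys is conceptual economy (no auxiliary $W$, no Lemma~\ref{lemma.1}); what the paper's route buys is that it works purely at the level of densities and bounded test functions, avoiding any appeal to the MGF continuity theorem.
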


\section{The pure hard-core model}

A basic ingredient of all the proofs  is the knowledge of the properties of the moment generating function of $S_N$ w.r.t. the Gibbs measure at $J=0$. However, compared with spin models, monomer-dimer models have an additional feature: the problem at $J=0$ is itself non trivial in the sense that the Gibbs measure is not a product measure. We start by deriving the properties of the partition function of the model at $J=0$ that will be used during all the proofs.

For given $u,t\in\R$ and $\eta\geq0$, let us consider
\be\label{momentgen}
\Zcl\big(u+\frac{t}{N^{\eta}}\big) \,=\, \sum_{D\in\DD_N}N^{-|D|}\exp\Big(\big(u+\frac{t}{N^{\eta}}\big)\,S_N(D)\Big)
\ee

In order to obtain an asymptotic expansion of \eqref{momentgen}, which allows us to obtain its scaling properties, we will use a connection between the monomer-dimer problem and Gaussian moments \cite{ACM2,Vlad}. Following the same argument of \cite{ACM2} one finds

\begin{proposition}\label{prop: gaussrepd}
The following representation of the partition function holds
\be\label{gaussrepd}
\Zcl(u+\frac{t}{N^{\eta}}) \,=\, \sqrt{\frac{N}{2\pi}}\, \int_{\mathbb{R}} \Big(\Psi_N(x)\Big)^N dx \;,
\ee
where
\be\label{bigpsi}
\Psi_N(x) \,:=\, \Big(x+\exp(u+\frac{t}{N^{\eta}})\Big)\, \exp(-\frac{x^2}{2}) \;.
\ee
\end{proposition}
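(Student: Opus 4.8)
The plan is to verify \eqref{gaussrepd} by evaluating both sides explicitly as power series in $e^{a}$ (with $a:=u+t/N^{\eta}$) and checking that they coincide term by term; the only nontrivial input is the elementary Gaussian moment formula, which is precisely the analytic identity encoding the \emph{decoupling} of the hard-core constraint mentioned in the Introduction.

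First I would rewrite the left-hand side combinatorially. Grouping the sum defining $\Zcl(a)$ over $\DD_N$ according to the number of dimers $k=|D|$, and recalling from \eqref{hardcore} that $S_N(D)=N-2k$ on such configurations, one gets
\[
\Zcl(a)\,=\,\sum_{k=0}^{\lfloor N/2\rfloor} M_{N,k}\,N^{-k}\,e^{a(N-2k)},
\]
where $M_{N,k}$ is the number of dimer configurations on the complete graph $K_N$ with exactly $k$ dimers, i.e. the number of $k$-matchings. Choosing the $2k$ matched vertices and then pairing them gives $M_{N,k}=\binom{N}{2k}(2k-1)!!=\tfrac{N!}{2^{k}\,k!\,(N-2k)!}$.

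Next I would transform the right-hand side. The integral in \eqref{gaussrepd} converges because $e^{-Nx^{2}/2}$ decays faster than the polynomial $(x+e^{a})^{N}$ grows. After the change of variable $x=y/\sqrt N$ it reads $\frac{1}{\sqrt{2\pi}}\int_{\R}\big(e^{a}+y/\sqrt N\big)^{N}e^{-y^{2}/2}\,dy$. Expanding $\big(e^{a}+y/\sqrt N\big)^{N}$ by the binomial theorem produces a \emph{finite} sum of terms $\binom{N}{j}N^{-j/2}e^{a(N-j)}y^{j}$, each integrable against $e^{-y^{2}/2}$, so I may integrate termwise. Using $\frac{1}{\sqrt{2\pi}}\int_{\R}y^{j}e^{-y^{2}/2}\,dy=0$ for odd $j$ and $=(2k-1)!!=\tfrac{(2k)!}{2^{k}k!}$ for $j=2k$, only the even powers survive, and the right-hand side becomes
\[
\sum_{k=0}^{\lfloor N/2\rfloor}\binom{N}{2k}\,\frac{(2k)!}{2^{k}k!}\,N^{-k}\,e^{a(N-2k)}\,=\,\sum_{k=0}^{\lfloor N/2\rfloor}\frac{N!}{2^{k}\,k!\,(N-2k)!}\,N^{-k}\,e^{a(N-2k)},
\]
which is exactly the expression found for $\Zcl(a)$, proving \eqref{gaussrepd}.

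There is essentially no hard step: the statement is a one-variable instance of the fact that the exponential generating function of the matching numbers of $K_N$ is a Gaussian integral (equivalently, Wick's theorem for a single Gaussian random variable), and it is exactly the computation of \cite{ACM2} specialised to the field $a=u+t/N^{\eta}$. The only points deserving a word of care are the count $M_{N,k}$ and the justification of the term-by-term integration, the latter being immediate since the binomial expansion is a finite sum of polynomials times $e^{-y^{2}/2}$.
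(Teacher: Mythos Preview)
Your proof is correct and is precisely the argument the paper defers to \cite{ACM2} for: expand $\Zcl(a)$ over the number of dimers using the matching count $M_{N,k}=\binom{N}{2k}(2k-1)!!$, and match it against the binomial expansion of $(y/\sqrt N+e^{a})^{N}$ integrated against the standard Gaussian via the moment identity $\E[y^{2k}]=(2k-1)!!$. The paper does not spell out these steps in the text, so your write-up simply fills in what the citation covers.
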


%
%
%
%

The above Gaussian representation allows us to use Theorem \ref{th: main} (see the Appendix), an extension of the Laplace method, to obtain a useful asymptotic expansion of $\Zcl(u+\frac{t}{N^{\eta}})$. Precisely
\begin{proposition}\label{asympzeta}
For a given  $u,t\in\R$ and $\eta\geq0$
\be\label{asymzre}
\Zcl\big(u+\frac{t}{N^{\eta}}\big)\equiv\exp\Big(N\, \pcl\big(u+\frac{t}{N^\eta}\big)\Big)\,\underset{N\to\infty}{\sim}\,\exp\Big(N\, \poo\big(u+\frac{t}{N^\eta}\big)\Big)\,\sqrt{\frac{1}{2-g(u)}}
\ee
where  $\poo$ and $g$ are defined respectively in \eqref{funzione.p} and \eqref{funzione.g}.
\end{proposition}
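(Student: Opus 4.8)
The plan is to substitute the Gaussian representation \eqref{gaussrepd} into the extended Laplace method of Theorem \ref{th: main}. Write $a_N:=\exp(u+t/N^\eta)$, so that $\Psi_N(x)=(x+a_N)e^{-x^2/2}$ by \eqref{bigpsi}. On the half-line $x>-a_N$ the integrand is positive and I would set $(\Psi_N(x))^N=\exp(N\phi_N(x))$ with $\phi_N(x):=\log(x+a_N)-\tfrac12x^2$. The first thing to check is that the region $\{x\le-a_N\}$, where $\Psi_N$ changes sign, contributes negligibly: writing $y=-x\ge a_N$, the quantity $(y-a_N)e^{-y^2/2}$ is maximised at $y=\tfrac12(a_N+\sqrt{a_N^2+4})$, and an elementary comparison shows this maximum is strictly below $\max_{x>-a_N}\Psi_N(x)$, with a ratio bounded away from $1$ uniformly in $N$ (since $a_N\to e^u>0$); hence $\int_{x\le-a_N}|\Psi_N|^N\,dx$ is exponentially smaller than the main term and may be discarded.

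On $(-a_N,\infty)$ the function $\phi_N$ is strictly concave, because $\phi_N''(x)=-(x+a_N)^{-2}-1<0$, and $\phi_N'(x)=(x+a_N)^{-1}-x$ has a unique zero $\bar x_N$, the positive root of $x^2+a_Nx-1=0$, namely
\[
\bar x_N=\tfrac12\big(\sqrt{a_N^2+4}-a_N\big)=a_N^{-1}\,g\big(u+t/N^\eta\big)
\]
by \eqref{funzione.g}; in particular $\bar x_N$ stays in a fixed compact subset of $(0,1)$ for $N$ large. The computations then collapse via the defining quadratic: $\bar x_N(\bar x_N+a_N)=1$ gives $\bar x_N+a_N=\bar x_N^{-1}$, and $\bar x_N^2=1-a_N\bar x_N=1-g(u+t/N^\eta)$. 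Substituting, and using $\log\bar x_N=\log g(u+t/N^\eta)-(u+t/N^\eta)$, one matches \eqref{funzione.p} to obtain
\[
\phi_N(\bar x_N)=-\log\bar x_N-\tfrac12\bar x_N^2=\poo\big(u+t/N^\eta\big),\qquad -\phi_N''(\bar x_N)=1+\bar x_N^2=2-g\big(u+t/N^\eta\big)>0 .
\]

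With these identities, the family $\Psi_N$ converges, together with its first two derivatives and uniformly on compact subsets of $(-e^u,\infty)$, to $\Psi_\infty(x)=(x+e^u)e^{-x^2/2}$; $\bar x_N$ converges to the nondegenerate maximiser of $\Psi_\infty$, and $-\phi_N''(\bar x_N)\to2-g(u)>0$. Theorem \ref{th: main} then yields
\[
\int_{\R}\big(\Psi_N(x)\big)^N\,dx\;\underset{N\to\infty}{\sim}\;\exp\!\big(N\poo(u+t/N^\eta)\big)\,\sqrt{\frac{2\pi}{N\,(2-g(u+t/N^\eta))}}\,,
\]
and multiplying by $\sqrt{N/2\pi}$ as in \eqref{gaussrepd} gives $\Zcl(u+t/N^\eta)\sim\exp(N\poo(u+t/N^\eta))\,(2-g(u+t/N^\eta))^{-1/2}$. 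Since $g$ is continuous and $u+t/N^\eta\to u$, replacing $g(u+t/N^\eta)$ by $g(u)$ in the sub-exponential prefactor leaves the asymptotic equivalence unchanged, which is \eqref{asymzre}. The step I expect to be the main obstacle is verifying the hypotheses of the $N$-dependent Laplace method of Theorem \ref{th: main} — convergence of the maximiser, nondegeneracy of the limiting curvature, and uniform control of the tails (including the sign-changing half-line $x<-a_N$); once these are settled the algebraic identities above make the remainder routine.
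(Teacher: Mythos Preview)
Your proposal is correct and follows the same route as the paper: apply the Gaussian representation of Proposition~\ref{prop: gaussrepd} and feed $\Psi_N$ into the extended Laplace method of Theorem~\ref{th: main}, identifying the maximiser $\hat x_N=a_N^{-1}g(u+t/N^\eta)$ via the stationarity condition $\hat x_N^2+a_N\hat x_N-1=0$ and computing $\log\Psi_N(\hat x_N)=\poo(u+t/N^\eta)$ and $(\log\Psi_N)''(\hat x_N)=-(2-g(u+t/N^\eta))$. Two minor remarks: Theorem~\ref{th: main} as stated already outputs the curvature at the \emph{limiting} point $\hat x$, namely $-f''(\hat x)=2-g(u)$, so your final replacement of $g(u+t/N^\eta)$ by $g(u)$ is built in rather than an extra step; and your separate treatment of the sign-changing half-line $x\le -a_N$ is precisely the verification of hypothesis~2) of Theorem~\ref{th: main} (which uses $\log|\psi_n|$), so it need not be argued outside the theorem.
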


\begin{proof}
Use proposition \ref{prop: gaussrepd} and check that the function $\Psi_N$ defined in \eqref{bigpsi} satisfies the hypothesis of Theorem \ref{th: main}, with $\x_N=e^{-(u+t/N^\eta)}g(u+t/N^\eta)\,$.
By means of the stationarity condition $\x_N^2+e^{u+t/N^\eta}\x_N-1=0$, one finds $\log \Psi_N(\x_N)=\poo(u+t/N^\eta)$ and $\frac{\partial ^2}{\partial x^2} \log \Psi_N(\x_N)=-2+g(u+t/N^\eta)$.\qed
\end{proof}

We will show that the previous proposition gives immediately Theorem \ref{teo1}. On other hand, in the  case $J>0$ we need additional information about the convergence of $\pcl$ to $\poo$.

\begin{proposition}\label{uniformlycon}
For each $k \in \{0, 1, 2,\ldots \}$, $\frac{\partial^k}{\partial h^k}\pcl(h)$ converges uniformly to $\frac{\partial^k}{\partial h^k}\poo(h)$ on compact subsets of $\R$.
\end{proposition}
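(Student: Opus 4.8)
The plan is to upgrade the asymptotic equivalence of Proposition \ref{asympzeta} to a statement about derivatives. Write $\pcl(h)=\poo(h)+\frac{1}{N}R_N(h)$, where, by Proposition \ref{asympzeta} with $t=0$, we have $R_N(h)=\frac{1}{2}\log\frac{1}{2-g(h)}+o(1)$ pointwise on $\R$. Since $\pcl$ and $\poo$ are each analytic in $h$ (the former because $\Zcl(h)$ is an entire, strictly positive function of $h$, the latter by the explicit formula \eqref{funzione.p}), so is $R_N$. It therefore suffices to show that $\frac{1}{N}R_N\to 0$ in $C^k$ on each compact interval, and for that it is enough to control $R_N$ and one of its derivatives on a slightly larger compact set and invoke analyticity, or more directly to show $\{R_N\}$ is uniformly bounded on compacts together with its derivatives up to order $k$.

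First I would obtain a genuine (non-asymptotic) representation of $R_N$. From Proposition \ref{prop: gaussrepd},
\[
\exp\big(N R_N(h)\big)\;=\;\exp\big(-N\poo(h)\big)\,\sqrt{\tfrac{N}{2\pi}}\int_{\R}\big(\Psi_N(x)\big)^N\,dx\;=\;\sqrt{\tfrac{N}{2\pi}}\int_{\R}\Big(\tfrac{\Psi_N(x)}{e^{\poo(h)}}\Big)^{N}dx .
\]
At $J=0$ the function $\Psi_N$ does not depend on $N$; call it $\Psi_h(x)=(x+e^h)e^{-x^2/2}$, so $R_N(h)=\frac{1}{N}\log\big(\sqrt{N/2\pi}\int_\R (\Psi_h(x)/e^{\poo(h)})^N dx\big)$. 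The unique maximizer $\x(h)=e^{-h}g(h)$ of $\log\Psi_h$ satisfies $\log\Psi_h(\x(h))=\poo(h)$ and $\partial_x^2\log\Psi_h(\x(h))=-(2-g(h))<0$, as recorded in the proof of Proposition \ref{asympzeta}. A quantitative Laplace/Watson estimate — i.e. the effective error bounds that come with the extended Laplace method of Theorem \ref{th: main} in the Appendix — gives, uniformly for $h$ in a compact set $K$,
\[
\sqrt{\tfrac{N}{2\pi}}\int_{\R}\Big(\tfrac{\Psi_h(x)}{e^{\poo(h)}}\Big)^{N}dx \;=\;\frac{1}{\sqrt{2-g(h)}}\,\big(1+O(1/N)\big),
\]
with the $O(1/N)$ uniform in $h\in K$; this uses that $\inf_{h\in K}(2-g(h))>0$, that $\x(h)$, $\sup_K|\x'(h)|$ and the relevant higher $x$-derivatives of $\Psi_h$ are bounded on $K$, and that the Gaussian tails of $e^{-Nx^2/2}$ control the integral away from $\x(h)$ uniformly. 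Consequently $R_N\to \frac{1}{2}\log\frac{1}{2-g}$ uniformly on $K$, and in particular $\{R_N\}$ is uniformly bounded on $K$.

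Next I would pass to derivatives. Two routes are available. (a) \emph{Analyticity route:} extend the above to a complex neighbourhood. For $h$ in a complex neighbourhood of $K$ one still has $\Zcl(h)\neq 0$ (or, if zeros occur, one works on a sub-neighbourhood avoiding them), so $R_N$ is holomorphic there and uniformly bounded by the same Laplace estimate applied to a steepest-descent contour; then Cauchy's estimates give $|R_N^{(k)}|\le C_k$ uniformly on $K$, hence $\frac{1}{N}R_N^{(k)}\to 0$ uniformly, i.e. $\frac{\partial^k}{\partial h^k}\pcl\to\frac{\partial^k}{\partial h^k}\poo$ uniformly on $K$. (b) \emph{Direct differentiation route:} differentiate $R_N(h)=\frac{1}{N}\log\big(\sqrt{N/2\pi}\int_\R \Psi_h(x)^N e^{-N\poo(h)}\,dx\big)$ under the integral sign; each $h$-derivative of $\log\Psi_h$ is a bounded smooth function on $K$ (explicitly $\partial_h\log\Psi_h(x)=e^h/(x+e^h)$ on the region that carries the mass, and similarly for higher orders after noting $x+e^h$ stays bounded away from $0$ near the maximizer), so $R_N^{(k)}(h)$ is expressed as a ratio of Laplace-type integrals against $\Psi_h^N$; applying the same quantitative Laplace expansion to numerator and denominator shows $R_N^{(k)}(h)$ converges — uniformly on $K$ — to the $k$-th derivative of $\frac{1}{2}\log\frac{1}{2-g(h)}$, a bounded limit, whence $\frac{1}{N}R_N^{(k)}\to 0$ uniformly.

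The main obstacle is making the Laplace-method error term \emph{uniform in the parameter} $h$ on compacts, rather than merely $o(1)$ for each fixed $h$: one must check that all the constants entering Theorem \ref{th: main} (the nondegeneracy constant $2-g(h)$, bounds on the maximizer $\x(h)$ and on finitely many $x$-derivatives of $\Psi_h$, and the tail bounds) are controlled uniformly over $h\in K$. Since $g$ is real-analytic with $0<g(h)<1$, $2-g(h)$ is bounded away from $0$ and $\infty$ on $K$, and $\x(h)=e^{-h}g(h)$ together with its derivatives is bounded on $K$, this uniformity is routine but must be stated. Granting it, route (a) then delivers the $C^k$ convergence for all $k$ simultaneously by a single application of Cauchy's integral formula, which is the cleanest way to conclude.
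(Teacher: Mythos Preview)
Your approach is genuinely different from the paper's. The paper does not use the Laplace expansion here at all; instead it invokes the Heilmann--Lieb theorem on the location of the complex zeros of $\Zcl(h)$ (they satisfy $\Im(h)\in\frac{\pi}{2}+\pi\Z$) to obtain a \emph{fixed} zero-free strip $U=\R+i(-\frac{\pi}{4},\frac{\pi}{4})$ independent of $N$, shows by elementary bounds on $|\Zcl|$ that $(\pcl)_N$ is uniformly bounded on compact subsets of $U$, and then appeals to the Vitali--Porter and Weierstrass theorems for normal families: a locally bounded sequence of holomorphic functions converging pointwise converges uniformly on compacts together with all derivatives. This is short and needs only the pointwise limit \eqref{eq: p classic limit}, no quantitative asymptotics. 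Your route, by contrast, is more self-contained (it does not import Heilmann--Lieb) and would yield the finer statement $R_N\to\frac{1}{2}\log\frac{1}{2-g}$ uniformly, at the cost of having to make the Laplace error uniform in the parameter $h$.

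That said, route (a) has a real gap. The parenthetical ``if zeros occur, one works on a sub-neighbourhood avoiding them'' does not suffice: the zeros of $\Zcl$ depend on $N$, and without further input nothing prevents them from approaching the real axis as $N\to\infty$, in which case no fixed complex neighbourhood is zero-free for all large $N$ and Cauchy's estimates give no uniform control on $R_N^{(k)}$. Securing an $N$-independent zero-free neighbourhood is precisely what the paper imports from Heilmann--Lieb. You could either cite that result, or argue that your steepest-descent asymptotics themselves force $\Zcl(h)\neq0$ on a fixed neighbourhood for all large $N$ --- but the latter requires a complex saddle-point analysis with uniform error control, substantially beyond the real statement of Theorem~\ref{th: main}, and is not spelled out. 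Route (b) avoids this issue and can be made rigorous, though your handling of the pole of $\partial_h\log\Psi_h$ at $x=-e^h$ is loose; it is cleaner to differentiate $\Psi_h(x)=(x+e^h)e^{-x^2/2}$ directly (so that $\partial_h\Psi_h=e^h e^{-x^2/2}$ is smooth in $x$) and then apply the uniform Laplace method to the resulting integrals.
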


\begin{proof}
The location of the complex zeros $h\in\C$ of the partition function $\Zcl(h)$ was described in the work of Heilmann and Lieb in \cite{HL}: Theorem 4.2 in \cite{HL} shows that these zeros satisfy $\Re(e^h)=0$, that is $\Im (h)\in\frac{\pi}{2}+\pi\Z\,$.
Set $U:=\R+i\,\big(-\frac{\pi}{4},\frac{\pi}{4}\big)\,\subset\C\,$.
The analytic function $\Zcl(h)$ does not vanish on the simply connected open set $U$, hence $\pcl(h)\equiv \frac{1}{N}\log \Zcl(h)$ is a well-defined analytic function on $U$.
Moreover the sequence $\big(\pcl(h)\big)_{N\in\N}$ is bounded uniformly in $N$ and in $h\in K$, for every $K$ compact subset of $U$; indeed
\[ \big|\pcl(h)\big| \,\leq\, \frac{1}{N}\,\big|\log\big|\Zcl(h)\big|\,\big| \,+\, \frac{2\pi}{N} \;,\]
from the definition of $\Zcl$ it follows immediately
\[ \frac{1}{N}\,\log\big|\Zcl(h)\big| \,\leq\, \frac{1}{N}\log\Zcl\Big(\sup_{h\in K}\Re (h)\Big) \;,\]
and on the other hand, since $\Zcl$ is a polynomial in the variable $e^h$, using the Fundamental Theorem of Algebra and thank to the choice of $U$, it follows
\[ \frac{1}{N}\,\log\big|\Zcl(h)\big| \,\geq\, \inf_{h\in K}e^{\Re (h)}\; \frac{\sqrt2}{2} \;.\]
Thus, the claim is a consequence of the Vitali-Porter and Weiestrass Theorems \cite{schiff}.\qed
\end{proof}

Let now prove {\bf Theorem \ref{teo1}}. For each $u\in\R$ and $\eta\geq0$ we define
\be\label{scalmag0}
S_{N,\eta,u}:= \frac{S_N-u}{N^\eta}.
\ee
In order to prove the two statements of the Theorem \ref{teo1}, namely the law of large numbers \eqref{LLN0} and the central limit theorem \eqref{CLT0}, it is enough to compute the limit of the  moment generating function of $S_{N,\eta,u}$ for $\eta=1,u=0$ and for $\eta=\frac{1}{2},u=g(h)$ respectively.

Consider the moment generating function of $S_{N,\eta,u}$ with respect to the Gibbs measure $\mucl\,$ with external field $h$, namely for all $t\in\R$
\be\label{momentgen0}
\phi_{S_{N,\eta,u}}(t) \,:=\, \sum_{D\in\DD_N}\mucl(D)\; e^{t\,S_{N,\eta,u}(D)} \;.
\ee
By \eqref{momentgen} one can rewrite \eqref{momentgen0} as
\begin{equation}\label{mome}
  \phi_{S_{N,\eta,u}}(t) \,=\, e^{-tu/N^\eta}\; \frac{\Zcl(h+\frac{t}{N^\eta})}{\Zcl(h)} \;.
\end{equation}
Using  proposition \ref{asympzeta} for the numerator and the denominator of \eqref{mome} one gets
\begin{equation}
\dfrac{\Zcl(h+\frac{t}{N^\eta})}
  {\Zcl(h)}\,\underset{N\to\infty}{\sim}\,\exp\Bigg(N\bigg(\poo\big(h+\frac{t}{N^\eta}\big)-\poo(h)\bigg)\Bigg)
\end{equation}

Setting $\eta=1$ and $u=0$ and using the Taylor expansion $\poo(h+\frac{t}{N})-\poo(h)=\frac{t}{N}\frac{\partial}{\partial h}\poo(h)+O(N^{-2})$ and \eqref{gprop},  we obtain
\be\label{MGFLL0}
\lim_{N\to\infty}\phi_{S_{N,1,0}}(t) \,=\, e^{t\,g(h)} \quad\forall\,t\in \R
\ee
which implies \eqref{LLN0}.

In the case of the central limit theorem, setting $\eta=\frac{1}{2}$ and $u=g(h)$, the leading order is provided by the Taylor expansion of $\poo(h+\frac{t}{\sqrt{N}})$ up to the second order
\begin{equation*}
\poo(h+\frac{t}{\sqrt{N}}) \,=\, \poo(h) + \frac{t}{\sqrt{N}}\,\frac{\partial}{\partial h}\poo(h) + \frac{t^2}{2N}\,\frac{\partial^2}{\partial h^2}\poo(h) + O(N^{-\frac{3}{2}}) \;,
\end{equation*}
and then we obtain
\be\label{MGFLL0}
\lim_{N\to\infty}\phi_{S_{N,\frac{1}{2},g(h)}}(t)= e^{\frac{t^2}{2}\,\frac{\partial}{\partial h}g(h)} \quad\forall\,t\in\R
\ee
which implies \eqref{CLT0} and completes the proof. \qed

\section{The model with attractive potential}

The strategy in the case $J>0$ follows the general method of Ellis and Newman \cite{ellis1978limit}, namely, in order to overcome the obstacle of the quadratic term in the interaction, we consider the convolution of the Gibbs measure $\mu_N$ with a suitable Gaussian random variable. Let us start by two simple lemmas.

\begin{lemma}\label{lemma.1}
For all integer $N$, let $W_{N}$ and $Y_{N}$ be two independent random variables. Assume that $W_{N}\overset{\mathcal{D}}{\rightarrow}W$, where
\begin{equation*}
\E\,e^{itW} \neq 0 \quad\forall t\in\R \;.
\end{equation*}
Then $Y_{N}\overset{\mathcal{D}}{\rightarrow}Y$ if and only if $W_{N}+Y_{N}\overset{\mathcal{D}}{\rightarrow}W+Y\,$.
\end{lemma}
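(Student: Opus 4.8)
The plan is to work entirely with characteristic functions and use the hypothesis $\E\,e^{itW}\neq0$ to "divide out" the $W_N$-factor. Write $\phi_{W_N}(t):=\E\,e^{itW_N}$, and similarly $\phi_{Y_N}$, $\phi_{W_N+Y_N}$. By independence of $W_N$ and $Y_N$ we have the product formula
\be\label{eq: prodchar}
\phi_{W_N+Y_N}(t)\,=\,\phi_{W_N}(t)\,\phi_{Y_N}(t)\qquad\forall\,t\in\R,\,N\in\N.
\ee
The convergence $W_N\overset{\mathcal D}{\rightarrow}W$ means $\phi_{W_N}(t)\to\phi_W(t)$ pointwise, and by hypothesis $\phi_W(t)\neq0$ for every $t$; moreover $\phi_W$ is continuous with $\phi_W(0)=1$.

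First I would prove the forward direction. Assume $Y_N\overset{\mathcal D}{\rightarrow}Y$, so $\phi_{Y_N}(t)\to\phi_Y(t)$ pointwise. Then \eqref{eq: prodchar} gives $\phi_{W_N+Y_N}(t)\to\phi_W(t)\phi_Y(t)$ for every $t$, and the right-hand side is the characteristic function of $W+Y$ (independence of $W$ and $Y$ being the natural reading of the statement $W_N+Y_N\overset{\mathcal D}{\rightarrow}W+Y$, with $W,Y$ independent). By L\'evy's continuity theorem, pointwise convergence of characteristic functions to a function continuous at $0$ implies convergence in distribution, so $W_N+Y_N\overset{\mathcal D}{\rightarrow}W+Y$. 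This direction does not even use the non-vanishing hypothesis.

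For the converse, assume $W_N+Y_N\overset{\mathcal D}{\rightarrow}W+Y$, i.e. $\phi_{W_N}(t)\phi_{Y_N}(t)\to\phi_W(t)\phi_Y(t)$ pointwise. Fix $t\in\R$. Since $\phi_W(t)\neq0$ and $\phi_{W_N}(t)\to\phi_W(t)$, for $N$ large we have $\phi_{W_N}(t)\neq0$, and
\be\label{eq: divchar}
\phi_{Y_N}(t)\,=\,\frac{\phi_{W_N}(t)\,\phi_{Y_N}(t)}{\phi_{W_N}(t)}\,\longrightarrow\,\frac{\phi_W(t)\,\phi_Y(t)}{\phi_W(t)}\,=\,\phi_Y(t).
\ee
Thus $\phi_{Y_N}(t)\to\phi_Y(t)$ for every $t\in\R$. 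Since $\phi_Y$ is continuous at $0$, L\'evy's continuity theorem again yields $Y_N\overset{\mathcal D}{\rightarrow}Y$.

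The only delicate point — and the step I would be most careful about — is the application of L\'evy's continuity theorem in the converse direction: one needs the limit function $\phi_Y$ to be continuous at the origin (equivalently, to rule out escape of mass), which is exactly where the structure $\phi_W\neq0$ everywhere is used, since it guarantees that the pointwise limit of $\phi_{Y_N}$ is the genuine characteristic function $\phi_Y$ rather than something pathological. If one instead wants to avoid presupposing that $Y_N$ converges to a limit $Y$ in the converse, one should first argue tightness of $(Y_N)$: near $t=0$, $|\phi_{W_N}(t)|$ is bounded below uniformly in $N$ (by continuity of $\phi_W$ and $\phi_W(0)=1$), so the small-$t$ behaviour of $\phi_{W_N}\phi_{Y_N}$ controls that of $\phi_{Y_N}$, giving equicontinuity of $(\phi_{Y_N})$ at $0$ and hence tightness; then every subsequential limit of $(Y_N)$ has characteristic function $\phi_W^{-1}\cdot(\phi_W\phi_Y)=\phi_Y$, forcing $Y_N\overset{\mathcal D}{\rightarrow}Y$. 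In the application ($W_N$ Gaussian, so $\phi_W(t)=e^{-\sigma^2t^2/2}\neq0$) all hypotheses are transparently met.
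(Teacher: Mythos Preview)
Your proof is correct. The paper does not actually prove this lemma: it is stated as one of ``two simple lemmas'' and left without proof, being a standard device going back to Ellis and Newman \cite{ellis1978limit}. Your characteristic-function argument is exactly the expected one; the forward direction is immediate from \eqref{eq: prodchar}, and in the converse the non-vanishing hypothesis $\phi_W(t)\neq0$ is used precisely where you use it, to legitimise the division in \eqref{eq: divchar}. Your additional remark on tightness is more than the statement requires (since $Y$ is given, $\phi_Y$ is a bona fide characteristic function and L\'evy's continuity theorem applies directly), but it does no harm and clarifies why the argument would still go through in the stronger formulation where one only assumes $W_N+Y_N$ converges to \emph{some} limit.
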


\begin{lemma}\label{lemma.2}
Let $W \sim \mathcal{N}(0,(2J)^{-1})$ be a random variable independent of $S_{N}$ for all $N\in\N$. Then given $\eta\geq0$ and $u\in\R$, the distribution of
\be
\frac{W}{N^{1/2 -\eta}}+\frac{S_{N}-N u}{N^{1-\eta}}
\ee
is
\begin{equation}\label{tesi.lemma.2}
C_N\,\exp\Big(N\,F_N\Big(\dfrac{x}{N^{\eta}}+u\Big)\Big)\,dx \;,
\end{equation}
where $C_N^{-1}=\int_{\mathbb{R}}\exp\big(N\,F_N(\frac{x}{N^{\eta}}+u)\big)dx$ and
\begin{equation}\label{funzione.F}
F_N(x):=-Jx^2+\pcl(2Jx+h-J) \;.
\end{equation}
\end{lemma}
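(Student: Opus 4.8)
The plan is to compute directly the density of the random variable $Z_N := \frac{W}{N^{1/2-\eta}} + \frac{S_N - Nu}{N^{1-\eta}}$ by convolving the (discrete) law of $S_N$ under $\mucl$ with the Gaussian law of $W$, and then to recognize the resulting Gaussian sum over dimer configurations as a partition function of the form \eqref{funzione.F}. First I would note that $\frac{W}{N^{1/2-\eta}} \sim \mathcal N\big(0, \frac{N^{2\eta-1}}{2J}\big)$ and write the density of $Z_N$ as the mixture over the values $s = S_N(D)$: for each $D \in \DD_N$ the shifted variable $\frac{s - Nu}{N^{1-\eta}} + \frac{W}{N^{1/2-\eta}}$ is normal centred at $\frac{s-Nu}{N^{1-\eta}}$ with that same variance. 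Summing against the weights $\mucl(D) = \frac{1}{\Zcl(h)} N^{-|D|} e^{h\, s}$ (using $-H_N(D) = h\, S_N(D)$ at $J=0$) gives
\be
f_{Z_N}(x) \,=\, \frac{1}{\Zcl(h)}\sqrt{\frac{J N^{1-2\eta}}{\pi}}\sum_{D\in\DD_N} N^{-|D|}\,\exp\!\Big(h\,S_N(D) - J N^{1-2\eta}\big(x - \tfrac{S_N(D)-Nu}{N^{1-\eta}}\big)^2\Big).
\ee

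The key algebraic step is to expand the square and absorb the $x$-independent pieces into the normalisation. Writing $s = S_N(D)$ and setting $y := \frac{x}{N^\eta} + u$, one has $x - \frac{s-Nu}{N^{1-\eta}} = N^{\eta-1}\big(Ny - s\big)$, so the exponent becomes $h\,s - \frac{J}{N}(Ny - s)^2$. Completing the square in $s$, one checks that $h\, s - \frac{J}{N}(Ny-s)^2 = -J N y^2 + (2Jy + h)\,s - \frac{J}{N}s^2$; but on $\DD_N$ we have the hard-core identity \eqref{hardcore}, $s = N - 2|D|$, which lets us trade the awkward $\frac{J}{N}s^2$ term — actually the cleaner route is to recall that any quadratic-in-$m_N$ Hamiltonian is, on the complete graph, equivalent up to a constant to a linear one (the Remark after \eqref{partitionf}), but here the Gaussian convolution has produced precisely the term $-\frac{J}{N}s^2$ that must be handled. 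The honest computation: $h s - \frac{J}{N}(Ny-s)^2 = -JNy^2 + 2Jys + hs - \frac{J}{N}s^2$, and since we only want the $x$-dependence we may instead keep the square uncompleted and simply observe that $\sum_D N^{-|D|} e^{(2Jy + h - J)s}\cdot(\text{const})$ — here one uses $s = N - 2|D|$ to see $e^{-Js} = e^{-JN}e^{2J|D|}$, absorbing $e^{-JN}$ into $C_N$ — equals $e^{N\cdot\text{const}}\,\Zcl(2Jy + h - J)$ by the definition \eqref{momentgen}. Collecting the purely $N,y$-dependent prefactors into $-JNy^2$, the bracket in the exponent is exactly $N\big(-Jy^2 + \frac1N\log\Zcl(2Jy+h-J)\big) = N\big(-Jy^2 + \pcl(2Jy+h-J)\big) = N F_N(y)$, which is \eqref{tesi.lemma.2}. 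The constant $C_N$ is then forced by normalisation, $C_N^{-1} = \int_\R \exp(N F_N(\frac{x}{N^\eta}+u))\,dx$, and no separate verification that this integral converges is needed beyond observing $F_N(y)\to -\infty$ as $|y|\to\infty$ (the $-Jy^2$ dominates the at-most-linear growth of $\pcl$).

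The only real bookkeeping obstacle is tracking the $N$-powers and the constant: one must be careful that the Jacobian from passing to the variable $x$ (via $x = N^\eta(y-u)$) and the Gaussian normalisation constant $\sqrt{JN^{1-2\eta}/\pi}$ combine with the $e^{-JN}$ and the $1/\Zcl(h)$ into something $x$-independent, so that it legitimately gets swallowed by $C_N$. Since the statement only asserts the form \eqref{tesi.lemma.2} with an unspecified normalising $C_N$, this is purely a matter of not dropping a factor; the structural identity $-Jy^2 + \pcl(2Jy+h-J) = F_N(y)$ is immediate from the definition of $F_N$ once the exponent has been rearranged. I would present the computation as a short chain of equalities for $f_{Z_N}(x)$, invoking \eqref{hardcore}, the definition \eqref{momentgen} of $\Zcl$, and the change of variables, ending with the identification of the exponent as $N F_N(\frac{x}{N^\eta}+u)$.
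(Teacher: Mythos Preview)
Your proof has a genuine gap: you are computing the density under the wrong measure for $S_N$. In this lemma (which sits in the section on the model with attractive potential, $J>0$) the random variable $S_N$ is distributed according to the \emph{full} Gibbs measure $\mu_N$ of \eqref{eq: mu}, whose Hamiltonian \eqref{hami} already carries the quadratic term $\frac{J}{N}S_N^2$. You instead weighted configurations by $\mucl(D)=\frac{1}{\Zcl(h)}N^{-|D|}e^{hS_N(D)}$, i.e.\ the $J=0$ measure. This is why you end up with the ``awkward'' term $-\frac{J}{N}s^2$ that you cannot eliminate: your subsequent attempt to dispose of it via the hard-core identity $s=N-2|D|$ only handles a \emph{linear} contribution $e^{-Js}$, not the quadratic one, and the Remark after \eqref{partitionf} does not say that a quadratic-in-$m_N$ Hamiltonian reduces to a linear one. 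The sum $\sum_D N^{-|D|}\exp\big(-JNy^2+2Jys+hs-\frac{J}{N}s^2\big)$ simply is not $e^{-JNy^2}\Zcl(2Jy+h-J)$.

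The fix is the whole point of the Ellis--Newman transform: take $S_N\sim\mu_N$, so that the exponent contributed by the Gibbs weight is $(h-J)s+\frac{J}{N}s^2$. After the Gaussian convolution you get
\[
(h-J)s+\frac{J}{N}s^2-\frac{J}{N}(Ny-s)^2 \;=\; -JNy^2+(2Jy+h-J)s,
\]
the two $\frac{J}{N}s^2$ terms cancelling exactly. Now the sum over $D$ is $\Zcl(2Jy+h-J)$ on the nose, and the exponent is $N\big(-Jy^2+\pcl(2Jy+h-J)\big)=NF_N(y)$. This is precisely what the paper does (working with the CDF rather than the density, but that is immaterial): write $\Prob\{\sqrt N\,W+S_N\in E\}$ as $(J/\pi N)^{1/2}\int_E \E_{\mu_N}\exp\!\big(-\frac{J}{N}(t-S_N)^2\big)\,dt$, expand the Hamiltonian, cancel the quadratic in $S_N$, and change variables $t=Nu+N^{1-\eta}x$.
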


\begin{proof}
Given $\theta\in\R\,$,
\be
\Prob\bigg\{\frac{W}{N^{1/2 -\eta}}+\frac{S_{N}-Nm}{N^{1-\eta}}\leq\theta\bigg\} \,=\, \Prob\Big\{\sqrt{N}W+S_{N}\in E\Big\}
\ee
where $E=(-\infty,\theta N^{1-\eta}+Nm]$.
%

The law of $\sqrt{N}W+S_{N}$ is given by the convolution of the Gaussian $\,\mathcal{N}(0,N (2J)^{-1})$
with the distribution of $S_{N}$ w.r.t. the Gibbs measure $\mu_N\,$:
\be\label{pas34}	
\begin{split}
&\Prob\Big\{\sqrt{N}W+S_{N}\in E\Big\}=\\
&\bigg(\frac{ J}{\pi N}\bigg)^{\frac{1}{2}}\int_{E} dt\; \E_{\mu_N}\exp\bigg(-\frac{J}{N}\Big(t-S_N\Big)^{2}\bigg)=\\
&\frac{1}{Z_{N}}\bigg(\frac{ J}{\pi N}\bigg)^{\frac{1}{2}} \int_{E} dt\; \exp\bigg(-\frac{J}{N}t^{2}\bigg)\;
\Zcl\bigg(\frac{2Jt}{N}+h-J \bigg) \;,
\end{split}
\end{equation}
where the last equality follows from  \eqref{momentgen}.
Making the change of variable $x=(t-Nu)/{N^{1-\eta}}$ in \eqref{pas34}, we obtain:
\begin{equation}\label{pim}
P\Big\{\sqrt{N}W+S_{N}\in E\Big\} \,=\, C_N \int_{-\infty}^{\theta}dx\; \exp\bigg(-JN\Big(\frac{x}{N^{\eta}}+u\Big)^{2}\bigg)\; \Zcl\bigg(2J\Big(\frac{x}{N^{\eta}}+u\Big)+h-J\bigg)
\end{equation}
and the integrated function can be rewritten as \eqref{tesi.lemma.2}. \qed
\end{proof}
	
The core of the problem is the convergence of the sequence of measures determined by \eqref{tesi.lemma.2} for suitable values of $\eta$ and $u$.
Thus, we are interested in the limit of quantities of the form
\be\label{above}
\int_{\mathbb{R}}\exp\Big(N\,F_N\Big(\dfrac{x}{N^{\eta}}+u\Big)\Big)\,\psi(x)\,dx
\ee
where $\psi$ is an arbitrary bounded continuous function.
Clearly, the results depend crucially on the scaling properties of $F_N$ near its maximum point(s). By \eqref{funzione.F}, \eqref{tildep} and \eqref{eq: p classic limit} we know that
\be\label{Fennelimit}
\lim_{N \to \infty} F_N(x)=\widetilde{p}(x),\,\,\forall\,x\in\R \;.
\ee
However, the study of the asymptotic behaviour of the integral \eqref{above} requires stronger convergence results provided by propositions \ref{asympzeta} and \ref{uniformlycon}.

Given a sequence of functions $f_N:\R\to\R$, for any $x,y \in\R$   we define
\be\label{deltavariation}
\Delta{f_N} (x;y):={f_N}(x+y)-{f_N}(y).
\ee

Let $\mu\equiv\mu(h,J)$ be a maximum point of $\widetilde p$ and denote by $2k$ the order of the first non zero derivative at $\mu$. Hence, making a Taylor expansion, one finds as $N\to\infty$
\be\label{asypitilde}
N\,\Delta\widetilde{p} (x\,N^{-\frac{1}{2k}};\mu) \,=\, \frac{\lambda}{(2k)!}\, x^{2k} + O\Big({N^{-\frac{1}{2k}}}\Big)
\ee
where $\lambda=\frac{\partial^{2k}}{\partial m^{2k}}\widetilde{p}(\mu)<0$.

The next proposition relates  the asymptotic behaviors of $N\,\Delta{F_N}$ and $N\,\Delta{\widetilde{p}}$.
\begin{proposition}\label{asyequivalence}
For any $x,y\in\R$ and $\eta\geq0$,
\be\label{asyeq}
\lim_{N\to\infty} \exp\Big(N\,\Big(F_N(x\,N^{-\eta}+y)-\widetilde{p}(x\,N^{-\eta}+y)\Big)\Big)\,=\,c(y)
\ee
where $c(y):= \big(2-g(2Jy+h-J)\big)^{-1/2}$. Hence,
\be\label{deltalimit}
N\Big(\Delta{F_N} (x\,N^{-\eta};y)-\Delta{\widetilde{p}} (x\,N^{-\eta};y)\Big) \,\underset{N\to\infty}\rightarrow\, 0 \;.
\ee
\end{proposition}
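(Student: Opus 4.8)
\textbf{Proof plan for Proposition \ref{asyequivalence}.}
The plan is to reduce the statement \eqref{asyeq} directly to Proposition \ref{asympzeta}. Writing out the definitions \eqref{funzione.F} and \eqref{tildep}, the quadratic terms $-Jx^2$ cancel, so that
\[
N\big(F_N(z)-\widetilde p(z)\big) \,=\, N\,\pcl\big(2Jz+h-J\big) \,-\, N\,\poo\big(2Jz+h-J\big),
\]
for any real $z$; here I specialize $z = xN^{-\eta}+y$. Setting $u := 2Jy+h-J$ and $t := 2Jx$, we have $2Jz+h-J = u + t\,N^{-\eta}$, which is exactly the argument appearing in \eqref{asymzre}. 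Thus Proposition \ref{asympzeta} gives
\[
\exp\Big(N\,\pcl\big(u+\tfrac{t}{N^\eta}\big) - N\,\poo\big(u+\tfrac{t}{N^\eta}\big)\Big) \,\underset{N\to\infty}{\longrightarrow}\, \frac{1}{\sqrt{2-g(u)}} \,=\, \big(2 - g(2Jy+h-J)\big)^{-1/2} \,=\, c(y),
\]
which is precisely \eqref{asyeq}. The only point requiring a word of care is that Proposition \ref{asympzeta} is stated as an asymptotic equivalence $\Zcl(u+t/N^\eta)\sim \exp(N\poo(\cdot))\,(2-g(u))^{-1/2}$, i.e. the ratio of the two sides tends to $1$; taking that ratio is the same as exponentiating the difference $N(\pcl-\poo)$ up to the constant factor $(2-g(u))^{-1/2}$, so the convergence of the ratio to $1$ is equivalent to the convergence in \eqref{asyeq}.

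For the second assertion \eqref{deltalimit}, I would apply \eqref{asyeq} at two points. By definition \eqref{deltavariation},
\[
N\Big(\Delta F_N(xN^{-\eta};y) - \Delta\widetilde p(xN^{-\eta};y)\Big)
= N\big(F_N(xN^{-\eta}+y)-\widetilde p(xN^{-\eta}+y)\big) - N\big(F_N(y)-\widetilde p(y)\big).
\]
By \eqref{asyeq}, the first bracket (taken inside an exponential) converges to $c(y)$; the second bracket is the case $x=0$ (or $\eta$ arbitrary with the first coordinate $0$), which also converges to $\log c(y)$ after exponentiating — more precisely $\exp\big(N(F_N(y)-\widetilde p(y))\big)\to c(y)$ as well. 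Since $c(y)>0$ is finite (the quantity $2-g(\cdot)$ is bounded away from $0$, as $g<1$), taking logarithms of the two convergent exponentials and subtracting yields that the difference of the two $N(F_N-\widetilde p)$ terms converges to $\log c(y)-\log c(y)=0$, which is \eqref{deltalimit}.

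The main obstacle, such as it is, is purely bookkeeping: one must check that the substitution $u=2Jy+h-J$, $t=2Jx$, $\eta=\eta$ genuinely matches the hypotheses of Proposition \ref{asympzeta} (it does, since that proposition holds for every real $u,t$ and every $\eta\geq 0$), and that passing from the $\sim$ statement to the exponential-of-difference statement is legitimate — which it is because $\exp(N\poo(u+t/N^\eta))$ is a strictly positive finite quantity, so dividing by it and by $(2-g(u))^{-1/2}$ preserves the limit. I would also remark, for use in the sequel, that the convergence in \eqref{asyeq} is in fact locally uniform in $y$ (and in $x$), since Proposition \ref{uniformlycon} upgrades the pointwise convergence $\pcl\to\poo$ to uniform convergence on compacta together with all derivatives; this uniformity is what will make \eqref{deltalimit} usable inside the integral \eqref{above}, but it is not needed for the bare statement as written.
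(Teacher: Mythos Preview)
Your proof is correct and follows exactly the approach sketched in the paper: cancel the quadratic terms using the definitions \eqref{funzione.F} and \eqref{tildep}, match the resulting argument $2J(xN^{-\eta}+y)+h-J$ with the $u+t/N^\eta$ of Proposition \ref{asympzeta}, and then obtain \eqref{deltalimit} by applying \eqref{asyeq} at the two points $xN^{-\eta}+y$ and $y$ and subtracting logarithms. The paper's own proof is the one-line version of what you wrote; your added remarks on the legitimacy of passing from $\sim$ to the exponential-of-difference form, and on local uniformity via Proposition \ref{uniformlycon}, are accurate and useful but go slightly beyond what the bare statement requires.
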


\begin{proof}
Keeping in mind the definitions \eqref{funzione.F}, \eqref{tildep} and using Proposition \ref{asympzeta} we get \eqref{asyeq}. Then  \eqref{deltalimit} is a straightforward  consequence. \qed
\end{proof}

The next two propositions allow us to control the integral \eqref{above} in the large $N$ limit.

\begin{proposition}\label{cutoff}
Set $M:=\max\{\tilde{p}(x)|x\in\mathbb{R}\}$, let $\mathcal{C}$ be any closed (possibly unbounded) subset of $\mathbb{R}$ which contains no global maximum points of $\tilde{p}$. Then there exists $\eps>0$ such that
\begin{equation}
e^{-NM}\int_{\mathcal{C}} e^{NF_N(x)}dx \,=\, O(e^{-N\eps}) \quad\text{as }N\rightarrow\infty.
\end{equation}
\end{proposition}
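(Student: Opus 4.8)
The plan is to split the integral over $\mathcal C$ into a compact piece and a tail, and treat each separately using the results already established. First I would reduce to the compact case: since $F_N(x)=-Jx^2+\pcl(2Jx+h-J)$ and $\pcl(h)\le \pcl(\sup_K\Re h)\le$ (something linear in $h$) by the monotonicity and growth bounds used in the proof of Proposition~\ref{uniformlycon}, the quadratic term $-Jx^2$ dominates for $|x|$ large; so there is $R>0$ (independent of $N$, for $N$ large) with $F_N(x)\le M - 1$ for all $|x|\ge R$. This handles the unbounded part of $\mathcal C$: $e^{-NM}\int_{\mathcal C\cap\{|x|\ge R\}}e^{NF_N(x)}dx\le e^{-N}\int_{|x|\ge R}e^{-N(Jx^2-C|x|-M-1)/2}\,e^{-NJx^2/2}dx=O(e^{-N})$ after a trivial Gaussian bound.

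It remains to bound $e^{-NM}\int_{\mathcal C\cap[-R,R]}e^{NF_N(x)}dx$ over the \emph{compact} set $\mathcal K:=\mathcal C\cap[-R,R]$, which by hypothesis still contains no global maximum point of $\widetilde p$. Here I would use the uniform convergence $F_N\to\widetilde p$ on $\mathcal K$: indeed by Proposition~\ref{uniformlycon} (the $k=0$ case) $\pcl\to\poo$ uniformly on the compact set $2J\mathcal K+h-J$, hence $F_N\to\widetilde p$ uniformly on $\mathcal K$. Since $\mathcal K$ is compact and $\widetilde p$ is continuous and attains its maximum $M$ only outside $\mathcal K$, we have $\delta:=M-\max_{x\in\mathcal K}\widetilde p(x)>0$; by uniform convergence, for $N$ large, $F_N(x)\le \widetilde p(x)+\delta/2\le M-\delta/2$ on $\mathcal K$. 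Therefore $e^{-NM}\int_{\mathcal K}e^{NF_N(x)}dx\le |\mathcal K|\,e^{-N\delta/2}$. Taking $\eps$ to be the minimum of $\delta/2$ and the exponent coming from the tail estimate gives the claim.

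The only genuinely delicate point is the tail estimate, i.e. the uniform-in-$N$ coercivity of $F_N$ at infinity; everything else is a routine combination of compactness, continuity of $\widetilde p$, and the already-proven uniform convergence. For the tail I would be careful that the bound on $\pcl(h)$ for large real $h$ is genuinely uniform in $N$: from $\Zcl(h)=\sum_{D}N^{-|D|}e^{hS_N(D)}$ one has $\Zcl(h)\le e^{hN}\sum_D N^{-|D|}=e^{hN}\Zcl(0)$, and $\pcl(0)=\frac1N\log\Zcl(0)$ is bounded uniformly in $N$ (it converges to $\poo(0)$), so $\pcl(h)\le h+O(1)$ uniformly; similarly $\pcl(h)\le O(1)$ for $h$ bounded above. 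Plugging this into $F_N(x)=-Jx^2+\pcl(2Jx+h-J)$ yields $F_N(x)\le -Jx^2+2J|x|+O(1)\to-\infty$ uniformly in $N$, which is exactly what the tail argument needs. This completes the proof.
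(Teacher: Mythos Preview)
Your proof is correct and follows essentially the same strategy as the paper's: split $\mathcal{C}$ into a tail and a compact piece, control the tail via the uniform-in-$N$ quadratic coercivity of $F_N$ (the paper obtains this from the Lipschitz bound $|\partial_h\pcl|\le 1$, you from the equivalent inequality $\Zcl(h)\le e^{hN}\Zcl(0)$), and control the compact piece via Proposition~\ref{uniformlycon}. The only cosmetic difference is in the final estimate: the paper writes $e^{NF_N}\le e^{(N-1)(M-\eps)}e^{F_N}$ on $\mathcal{C}$ and invokes the uniform bound $\sup_N\int_\R e^{F_N}<\infty$, whereas you use an explicit Gaussian bound on the tail integral and a length bound on the compact piece.
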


\begin{proof}
We observe that the sequence of functions $(\pcl)_{N\in\N}$ is uniformly Lipschitz with constant $1$, namely for all $h,h'\in\R$ and $N\in\N$
\be\label{Lipschitz}
|\pcl(h)-\pcl(h')| \,\leq\, |h-h'| \;,
\ee
since $\frac{\partial}{\partial h} \pcl = \E_{\mucl}(m_{N}) \in[0,1]\,$.
From \eqref{Lipschitz} and definition \eqref{funzione.F} we get
\be\label{Limit}
\lim_{|x|\to\infty}\sup_N F_N(x)=-\infty
\ee
and
\be\label{finiteinte}
\sup_N \int_{\mathbb{R}}e^{F_N(x)} dx<\infty \;.
\ee
Fixed $\eps_1>0$, by \eqref{Limit} we can pick a number $A\in\R$ sufficiently large such that
\be\label{cutoffA}
\sup_{x\in \mathcal{O}_A}F_N(x)-M\leq-\eps_1 \quad\forall\,N\in\N
\ee
where $\mathcal{O}_A\equiv\{x\in\R: |x|> A\}$.
Furthermore $\mathcal{C}\setminus \mathcal{O}_A$ is compact (or possibly empty) and then, by proposition \ref{uniformlycon}, there exist $\eps_2>0$ and $\bar N$ such that
\be\label{cutoffcompact}
\sup_{x\in \mathcal{C}\setminus\mathcal{O}_A} F_N(x)-M \leq -\eps_2 \quad\forall\,N>\bar{N} \;.
\ee
Thus setting $\eps:=\min(\eps_1,\eps_2)$ we get
\be
\sup_{x\in \mathcal{C}}F_N(x)-M\leq -\eps \quad\forall\,N>\bar{N}
\ee
Hence, for $N>\bar{N}$,
\be \begin{split}
e^{-NM} \int_{\mathcal{C}}e^{N\,F_N(x)}dx \,&\leq\,
e^{-NM}\, e^{(N-1)(M-\epsilon)} \int_{\mathcal{C}} e^{F_N(x)}dx \\
&\leq\, e^{-N\eps}\, e^{-(M-\eps)} \int_{\mathbb{R}} e^{F_N(x)}dx \;.
\end{split} \ee
The last is uniformly bounded in $N$  by \eqref{finiteinte} and this completes the proof.\qed
\end{proof}

In the rest of this section  $\partial^{k}f(x)$ denotes the $k^{\mathrm{th}}$-derivative of a function $f$ at the point $x$.

\begin{proposition}\label{prop: DCT}
Let $\mu$ be a maximum point of $\widetilde{p}$, let $2k$ be the order of the first non-zero derivative of $\widetilde p$ at $\mu$. Given $\delta,\eps>0$, there exists $\overline{N}_\eps$ such that for all $N>\overline{N}_{\eps}$
\be \label{DCT}
N\,\Delta F_N\big(x\,N^{-\frac{1}{2k}};\mu\big) \,\leq\, \sum_{j=1}^{2k-1} \eps\, x^j + L_{\delta,\eps}\,x^{2k}
\qquad\forall\,x,\,|x|<\delta N^\frac{1}{2k}
\ee
where
\be \label{remanu}
L_{\delta,\eps} \,:=\, \frac{\partial^{2k}\widetilde{p}(\mu)+\eps}{(2k)!} \,+\,
\delta\;\frac{\sup_{[\mu-\delta,\mu+\delta]}|\partial^{2k+1}\widetilde{p}|+\eps}{(2k+1)!} \;.
\ee
In particular, since $\partial^{2k}\widetilde{p}(\mu)<0$, one can choose $\delta,\eps>0$ such that $L_{\delta,\eps}<0$, and then the sequence of functions
\be \label{dominate}
\exp\big(N\,\Delta {F_N}(x\,N^{-\frac{1}{2k}};\mu)\big)\;\1\big(|x|<\delta N^{\frac{1}{2k}}\big)
\ee
turns out to be dominated by an integrable function of $x\,$.
\end{proposition}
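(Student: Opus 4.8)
The plan is to obtain the bound \eqref{DCT} by comparing $N\,\Delta F_N$ with $N\,\Delta\widetilde p$ through a quantitative version of Proposition \ref{asyequivalence}, and then to control $N\,\Delta\widetilde p$ by its Taylor expansion at $\mu$. First I would write, for $|x|<\delta N^{1/2k}$,
\[
N\,\Delta F_N\big(x\,N^{-\frac1{2k}};\mu\big) \,=\, N\,\Delta\widetilde p\big(x\,N^{-\frac1{2k}};\mu\big) \;+\; N\Big(\Delta F_N\big(x\,N^{-\frac1{2k}};\mu\big)-\Delta\widetilde p\big(x\,N^{-\frac1{2k}};\mu\big)\Big),
\]
and handle the two pieces separately. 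For the first piece, since the derivatives of $\widetilde p$ up to order $2k-1$ vanish at $\mu$, Taylor's theorem with Lagrange remainder gives
\[
\Delta\widetilde p(y;\mu) \,=\, \frac{\partial^{2k}\widetilde p(\mu)}{(2k)!}\,y^{2k} \,+\, \frac{\partial^{2k+1}\widetilde p(\xi)}{(2k+1)!}\,y^{2k+1}
\]
for some $\xi$ between $\mu$ and $\mu+y$; substituting $y=x\,N^{-1/2k}$ and bounding $|y|\le\delta$ and $|\partial^{2k+1}\widetilde p(\xi)|\le\sup_{[\mu-\delta,\mu+\delta]}|\partial^{2k+1}\widetilde p|$ yields
\[
N\,\Delta\widetilde p\big(x\,N^{-\frac1{2k}};\mu\big) \,\le\, \frac{\partial^{2k}\widetilde p(\mu)}{(2k)!}\,x^{2k} \,+\, \delta\,\frac{\sup_{[\mu-\delta,\mu+\delta]}|\partial^{2k+1}\widetilde p|}{(2k+1)!}\,x^{2k}
\]
(using $N\cdot N^{-(2k+1)/2k}=N^{-1/2k}\le 1$ and $|x|^{2k+1}N^{-1/2k}=|x|^{2k}\cdot|x|N^{-1/2k}\le\delta\,|x|^{2k}$ on the relevant range), which accounts for the two non-$\eps$ terms defining $L_{\delta,\eps}$.

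For the second piece I would invoke \eqref{deltalimit} of Proposition \ref{asyequivalence}, but I need it in a form uniform over the growing window $|x|<\delta N^{1/2k}$ rather than just pointwise. The cleanest route is to go back to the exact identity behind \eqref{funzione.F}: by definition $F_N(x)-\widetilde p(x)=\pcl(2Jx+h-J)-\poo(2Jx+h-J)$, so that
\[
\Delta F_N(y;\mu)-\Delta\widetilde p(y;\mu) \,=\, \big(\pcl-\poo\big)(2J(\mu+y)+h-J) \,-\, \big(\pcl-\poo\big)(2J\mu+h-J).
\]
Writing this difference as an integral of $\partial(\pcl-\poo)$ and using Proposition \ref{uniformlycon} with $k=1$ — which gives $\sup_{|h'|\le R}|\partial\pcl(h')-\partial\poo(h')|\to 0$ for each fixed compact radius $R$ — together with the fact that $2J(\mu+y)+h-J$ stays in a fixed compact set whenever $|y|\le\delta$, bounds $N$ times this difference by $N\cdot 2J\delta\cdot o(1)$. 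That is not yet $o(1)$, so one refines: the \emph{second} derivative $\partial^2\pcl$ also converges uniformly on compacts (Proposition \ref{uniformlycon} with $k=2$), and $\partial(\pcl-\poo)(2J\mu+h-J)$ is itself $o(1)$, so a further Taylor step in $y$ gives $|\Delta F_N(y;\mu)-\Delta\widetilde p(y;\mu)|\le |y|\,o(1) + C y^2 o(1)$; with $|y|=|x|N^{-1/2k}$ this multiplies by $N$ to $|x|N^{1-1/2k}o(1)$ — still not obviously small. The genuinely sharp statement one must extract is that $N\big(\pcl-\poo\big)$ converges, not merely that $\pcl-\poo\to 0$; this is exactly the content of Proposition \ref{asyequivalence} (equivalently Proposition \ref{asympzeta}, which gives $N(\pcl(h)-\poo(h))\to -\tfrac12\log(2-g(h))$ and in fact a stronger asymptotic with controlled remainder). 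So I would use Proposition \ref{asympzeta} to write $N\big(\pcl(h)-\poo(h)\big) = -\tfrac12\log(2-g(h)) + r_N(h)$ with $r_N\to 0$ uniformly on compacts, differentiate the smooth limit $-\tfrac12\log(2-g(\cdot))$ twice (it is $C^\infty$), and conclude that the second piece is bounded by $\eps_0 + \tilde L\,|x|N^{-1/2k}\cdot(\text{bounded})$, and after reorganizing, absorbed into the $\sum_{j=1}^{2k-1}\eps\,x^j$ terms plus an arbitrarily small contribution to the $x^{2k}$ coefficient. This reorganization — distributing the genuinely small error across the lower-order monomials $x^j$, $1\le j\le 2k-1$, with coefficient $\eps$, which is legitimate because for $|x|<\delta N^{1/2k}$ a term like $|x|N^{-1/2k}\le\delta$ is itself small — is what produces the precise shape of the right-hand side of \eqref{DCT}.

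The main obstacle is precisely this: Proposition \ref{asyequivalence} as stated is only a \emph{pointwise} limit in $x$, whereas \eqref{DCT} is a \emph{uniform} bound over $|x|<\delta N^{1/2k}$, a window that grows with $N$. Bridging this gap honestly requires the uniform-on-compacts convergence of Proposition \ref{uniformlycon} applied to $\pcl-\poo$ and its derivatives (the relevant arguments $2J(x N^{-1/2k}+\mu)+h-J$ all lie in the fixed compact $[2J(\mu-\delta)+h-J,\,2J(\mu+\delta)+h-J]$ once $|x|<\delta N^{1/2k}$), combined with the fact from Proposition \ref{asympzeta} that the rescaled difference $N(\pcl-\poo)$ converges to a smooth limit. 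Once the uniform comparison $N(\Delta F_N - \Delta\widetilde p) = O(|x|N^{-1/2k}) + o(1)$ is in hand on that window, the Taylor bound on $N\,\Delta\widetilde p$ does the rest, and the final remark about domination is immediate: choosing $\delta,\eps$ so small that $L_{\delta,\eps}<0$ — possible since $\partial^{2k}\widetilde p(\mu)<0$ — the right-hand side of \eqref{DCT} is a polynomial in $x$ with strictly negative leading coefficient, hence $\exp$ of it is integrable, and the indicator only shrinks the support, so \eqref{dominate} is dominated by $\exp\big(\sum_{j=1}^{2k-1}\eps x^j + L_{\delta,\eps}x^{2k}\big)\in L^1(\R)$.
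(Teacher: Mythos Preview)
Your decomposition $N\,\Delta F_N = N\,\Delta\widetilde p + N(\Delta F_N - \Delta\widetilde p)$ and the Taylor bound on the first piece are fine; the gap is in the second piece. You correctly see that Proposition~\ref{asyequivalence} is only pointwise, and you try to repair this by writing $\phi_N(h):=N\big(\pcl(h)-\poo(h)\big)=-\tfrac12\log(2-g(h))+r_N(h)$ with $r_N\to0$ uniformly on compacts. But Proposition~\ref{asympzeta} does not give this uniformity: it is a pointwise asymptotic. Worse, even granting $\phi_N\to\ell$ uniformly, that alone does not suffice. With $y=xN^{-1/2k}$ and $a=2J\mu+h-J$, the second piece equals $\phi_N(a+2Jy)-\phi_N(a)$; by the mean value theorem this is $2Jy\,\phi_N'(\tilde a)$, and $\phi_N'=N(\partial\pcl-\partial\poo)$ is \emph{not} known to be bounded --- Proposition~\ref{uniformlycon} only gives $\partial\pcl-\partial\poo\to0$, not $=O(1/N)$. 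Equivalently, you would need $C^1$-convergence of $\phi_N$ on compacts, which is strictly stronger than anything furnished by Propositions~\ref{asympzeta} or~\ref{uniformlycon}; your final ``reorganization'' step does not close this gap.

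The paper sidesteps the uniformity problem by Taylor expanding $F_N$ itself (not $\widetilde p$) at $\mu$ with Lagrange remainder to order $2k+1$:
\[
N\,\Delta F_N\big(xN^{-\frac{1}{2k}};\mu\big) \,=\, \sum_{j=1}^{2k-1}\frac{\partial^jF_N(\mu)}{j!}\,N^{1-\frac{j}{2k}}\,x^j + \frac{\partial^{2k}F_N(\mu)}{(2k)!}\,x^{2k} + \frac{\partial^{2k+1}F_N(\xi)}{(2k+1)!}\,N^{-\frac{1}{2k}}\,x^{2k+1}.
\]
This is an \emph{identity}, so the uniform-in-$x$ bound reduces to bounding scalar coefficients. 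For $j=2k$ and $2k+1$, Proposition~\ref{uniformlycon} directly gives $\partial^{2k}F_N(\mu)\to\partial^{2k}\widetilde p(\mu)$ and uniform control of $\partial^{2k+1}F_N$ on $[\mu-\delta,\mu+\delta]$; together with $|x|^{2k+1}N^{-1/2k}\le\delta\,x^{2k}$ this produces $L_{\delta,\eps}\,x^{2k}$. The delicate point is $\partial^jF_N(\mu)\,N^{1-j/2k}\to0$ for $1\le j\le 2k-1$ (Proposition~\ref{uniformlycon} only gives $\partial^jF_N(\mu)\to\partial^j\widetilde p(\mu)=0$, without rate). Here the paper substitutes both Taylor expansions into the pointwise limit \eqref{deltalimit}; after the $x^{2k}$ coefficient cancels via Proposition~\ref{uniformlycon}, one is left with $\sum_{j=1}^{2k-1}\tfrac{1}{j!}\partial^jF_N(\mu)\,N^{1-j/2k}\,x^j\to0$ for every fixed $x$, and since these are polynomials of bounded degree in $x$, each coefficient must tend to $0$. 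Thus the needed rate is extracted from \eqref{deltalimit} by a polynomial-identity argument rather than from any uniform Laplace estimate.
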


\begin{proof}
The Taylor expansion of $F_N$ at the point $\mu$ gives
\be \label{control}
N\,\Delta{F_N} (x\,N^{-\frac{1}{2k}};\mu) \,=\,
\sum_{j=1}^{2k-1}\frac{\partial^jF_N(\mu)}{j!}\,N^{1-j/2k}\,x^j \,+\,
\frac{\partial^{2k}F_N(\mu)}{(2k)!}\,x^{2k} \,+\,
\frac{\partial^{2k+1}F_N(\xi)}{(2k+1)!}\,N^{-\frac{1}{2k}}\,x^{2k+1}
\ee
where $\xi\in(\mu,\mu+x\,N^{-\frac{1}{2k}})$.
We claim that for any $j\in\{1,\ldots, 2k-1\}$
\be \label{pointwise.der}
\partial^jF_N(\mu)\,N^{1-j/2k} \,\underset{N\to\infty}\rightarrow\,0 \;.
\ee
Indeed, by \eqref{deltalimit}
\be \label{appo}
N\Big(\Delta{F_N} (x\,N^{-\frac{1}{2k}};\mu)-\Delta{\widetilde{p}} (x\,N^{-\frac{1}{2k}};\mu)\Big) \,\underset{N\to\infty}\rightarrow\, 0 \;,
\ee
that is, by substituting \eqref{control} and \eqref{asypitilde} into \eqref{appo},
\be \label{control2}
\sum_{j=1}^{2k-1}\frac{\partial^jF_N(\mu)}{j!}\,N^{1-j/2k}\,x^j \,+\,
\frac{\partial^{2k}F_N(\mu)-\partial^{2k}\widetilde p(\mu)}{(2k)!}\,x^{2k} \,+\,
O\Big(N^{-\frac{1}{2k}}\Big) \,\underset{N\to\infty}\rightarrow\, 0 \;,
\ee
hence using proposition \ref{uniformlycon}, we get
\be
\sum_{j=1}^{2k-1}\frac{\partial^jF_N(\mu)}{j!}\,N^{1-j/2k}\,x^j \,\underset{N\to\infty}\rightarrow\, 0
\ee
which implies \eqref{pointwise.der} since $x$ is arbitrary.
Thus \eqref{pointwise.der} gives the control of the terms of order up to $2k-1$ in \eqref{control}.
The last two terms in \eqref{control} can be grouped together observing that $|x|^{2k+1} < x^{2k}\delta N^\frac{1}{2k}$; then the estimate \eqref{DCT} is obtained using the uniform convergence of $\partial^{2k}F_N$, $\partial^{2k+1}F_N$ on the compact set $[\mu-\delta,\mu+\delta]$, which is guaranteed by proposition \ref{uniformlycon}. \qed
\end{proof}

Let now prove {\bf Theorem \ref{teorema.1}}.
We denote by $\mathcal{M}=\{\mu_l\}_{l=1,\ldots, P}$ the set global maximum points of $\widetilde{p}$
and let $k_l$ and $\lambda_l$ be as in \eqref{asypitilde}.
Set $M := \max_{m} \widetilde{p}(m) = \widetilde{p}(\mu_l)$ for each $l=1,\ldots, P$.
From the analysis  of $\widetilde{p}$ and using the properties of the function $g$ (see \cite{ACM}), it turns out that $k_l$ do not depend on $l$ and precisely
\be\label{typemaximum}
\big(\mathcal{M}, k\big)=
\begin{cases}
\big(\{m^*(h,J)\},1\big) & \text{if }\,(h,J)\in(\R\times\R^+)\setminus\big(\gamma\cup(h_c,J_c)\big)\\
\big(\{m_c\},2\big) & \text{if }\,(h,J)=(h_c,J_c)\\
\big(\{m_1(J),m_2(J)\},1\big) & \text{if }\,(h,J)\in\gamma \,.
\end{cases}
\ee

The argument described below applies in all the cases proving respectively \eqref{pequal1} and  \eqref{pequal2}.
Keeping in mind \eqref{typemaximum}, we proceed with the computation of the limiting distribution of the monomer density $m_N=S_N/N$. By lemmas \ref{lemma.1} and \ref{lemma.2} with $\eta=0$ and $u=0$, it suffices to prove that for any bounded continuous function $\psi$
\begin{equation}\label{teo.1.passaggio.1}
\frac{\displaystyle{\int_{\mathbb{R}}}e^{N\,F_N(x)}\psi(x)dx}{\displaystyle{\int_{\mathbb{R}}}e^{N\,F_N(x)}dx}
\,\rightarrow\, \dfrac{\sum\limits_{l=1}^{P}\psi(\mu_{l})b_{l}}{\sum\limits_{l=1}^{P}\,b_{l}} \;.
\end{equation}
For each $l=1,\dots,P$ let $\delta_l>0$ such that the sequence of functions \eqref{dominate}, with $\mu_l$ in place of $\mu$, is dominated by an integrable function.
We choose $\bar{\delta}=\min\{\delta_{l}\;|\;l=1,\dots,P\}$, decreasing it (if necessary) to assure that $0<\bar{\delta}<\min\{|\mu_{l}-\mu_{s}|:1\leq l\neq s\leq P\}$.
Denote by $\mathcal{C}$ the closed set
\begin{equation*}
\mathcal{C} := \mathbb{R} \setminus \bigcup_{l=1}^{P}(\mu_{l}-\bar{\delta},\mu_{l}+\bar{\delta}) \;;
\end{equation*}
by proposition \ref{cutoff} there exists $\eps>0$ such that as $N\rightarrow\infty$
\begin{equation}\label{teo.1.passaggio.2}
e^{-NM}\int_{\mathcal{C}}e^{N\,F_N(x)}\psi(x)dx \,=\, O(e^{-N\eps}) \;.
\end{equation}
For each $l=1,\dots,P$ we have
\be \label{passaggio} \begin{split}
& N^\frac{1}{2k}\, e^{-NM} \int_{\mu_{l}-\bar{\delta}}^{\mu_{l}+\bar{\delta}}e^{N\,F_N(x)}\psi(x)\,dx \,=\\
&=\, e^{N(F_N(\mu_l)-M)}\int_{|w|<\bar{\delta}N^\frac{1}{2k}}\, \exp\Big(N\,\Delta F_N\big(wN^{-\frac{1}{2k}};\mu_l\big)\Big)\, \psi\big(wN^{-\frac{1}{2k}}+\mu_l\big)\,dw
\end{split} \ee
where the  equality follows from the change of variable $x=\mu_l+wN^{-\frac{1}{2k}}$ and $\Delta F_N$ is defined in \eqref{deltavariation}.

\noindent Since $M\equiv \widetilde{p}(\mu_l)$, from \eqref{asyeq} we know that
\be\label{sorpresa}
\lim_{N\to\infty}e^{N\,(F_N(\mu_l)-M)}=\frac{1}{\sqrt{2-g(2J\mu_l+h-J)}}=\frac{1}{\sqrt{2-\mu_l}}
\ee
where the last equality follows from the fact that $\mu_l$ must satisfy the consistency equation \eqref{consequation}.

\noindent By Proposition \ref{prop: DCT} we can apply the dominated convergence theorem  to the integral on the r.h.s. of \eqref{passaggio}, then by \eqref{deltalimit} and \eqref{asypitilde} we obtain
\be \label{teo.1.passaggio.4} \begin{split}
& \lim_{N\to\infty} N^\frac{1}{2k}\, e^{-NM} \int_{\mu_l-\bar{\delta}}^{\mu_l+\bar{\delta}} e^{N\,F_N(x)} \,\psi(x)\,dx \,=\\
&=\, \frac{1}{\sqrt{2-\mu_l}}\, \int_{\mathbb{R}}\exp\Big(\frac{\lambda_{l}}{(2k)!}w^{2k}\Big)\, \psi(\mu_l)\,dw \;.
\end{split} \ee
Making the change of variable $x=w(-\lambda_{l})^\frac{1}{2k}$ in the r.h.s. of (\ref{teo.1.passaggio.4}) and using (\ref{teo.1.passaggio.2}) we obtain
\begin{equation}\label{teo.1.passaggio.6}
\lim_{N\rightarrow\infty} N^\frac{1}{2k}\, e^{-NM} \int_{\mathbb{R}}e^{N\,F_N(x)}\psi(x)dx \,=\,
\sum_{l=1}^{P} \frac{1}{\sqrt{2-\mu_l}}\,(-\lambda_{l})^{-\frac{1}{2k}}\,\psi(\mu_{l})\, \int_{\mathbb{R}}\exp\Big(-\frac{x^{2k}}{(2k)!}\Big)\,dx \;.
\end{equation}
The analogous limit for the  denominator of \eqref{teo.1.passaggio.1} follows from \eqref{teo.1.passaggio.6} by choosing $\psi=1$.
This concludes the proof of the Theorem \ref{teorema.1}. \qed

Let now prove the {\bf Theorem \ref{teorema.2}}.
Keeping in mind \eqref{typemaximum}, let us start by proving the following
\begin{equation}\label{teo.2.passaggio.1.bu}
\frac{\displaystyle \int_{\mathbb{R}} \exp\Big(N\,F_N\big(xN^{-\frac{1}{2k}}+m^*\big)\Big)\, \psi(x)\, dx}
{\displaystyle \int_{\mathbb{R}} \exp\Big(N\,F_N\big(xN^{-\frac{1}{2k}}+m^*)\big) \,dx} \,\rightarrow\,
\frac{\displaystyle \int_{\mathbb{R}} \exp\Big(\dfrac{\lambda}{(2k)!}\,x^{2k}\Big) \,\psi(x)\,dx}
{\displaystyle \int_{\mathbb{R}} \exp\Big(\dfrac{\lambda}{(2k)!}\,x^{2k}\Big)\,dx}
\end{equation}
for any bounded continuous function $\psi$.
We pick $\delta>0$ such that the sequence of functions \eqref{dominate} is dominated by a integrable function.
By proposition \ref{cutoff} there exists $\eps>0$ such that as $N\rightarrow\infty$
\begin{equation}\label{teo.2.passaggio.2.bu}
e^{-NM} \int_{|x|\geq\delta N^\frac{1}{2k}}\exp\Big(N\,F_N\big(xN^{-\frac{1}{2k}}+m^*\big)\Big) \,\psi(x)\,dx \,=\, O\big(N^\frac{1}{2k}\,e^{-N\eps}\big)
\end{equation}
where $M=\max_m\widetilde{p}(m)$. On the other hand as $|x|<\delta N^{1/2k}$
\be \begin{split}
& e^{-NM} \int_{|x|<\delta N^\frac{1}{2k}} \exp\Big(N\,F_N\big(xN^{-\frac{1}{2k}}+m^*\big)\Big) \,\psi(x)\,dx \,=\\
&=\, e^{(F_N(m^*)-M)}\int_{|x|<\delta N^\frac{1}{2k}}\!\!\exp\Big(N\,\Delta F_N\big(xN^{-\frac{1}{2k}};m^*\big)\Big) \,\psi(x)\,dx \;.
\end{split} \ee
Thus, by proposition \ref{prop: DCT} we can apply the dominated convergence theorem, and then by \eqref{sorpresa}, \eqref{deltalimit} and \eqref{asypitilde} we obtain
\begin{equation}\label{teo.2.passaggio.3.bu}
\lim_{N\rightarrow\infty} e^{-NM} \int_{|x|<\delta N^\frac{1}{2k}} \exp\Big(N\,F_N\big(xN^{-\frac{1}{2k}}+m^*\big)\Big) \,\psi(x)\,dx
\,=\, \frac{1}{\sqrt{2-m^*}}\, \int_{\mathbb{R}} \exp\Big(\frac{\lambda}{(2k)!}\,x^{2k}\Big) \,\psi(x)\,dx
\end{equation}
which, combined with \eqref{teo.2.passaggio.2.bu}, implies \eqref{teo.2.passaggio.1.bu}. 	
	
For $k=2$, by lemmas \ref{lemma.1} and \ref{lemma.2} with $\eta=1/4$ and $u=m^*$, the convergence \eqref{teo.2.passaggio.1.bu} is enough to obtain \eqref{CLc}.

For $k=1$, by lemmas \ref{lemma.1} and \ref{lemma.2} with $\eta=1/2$ and $u=m^*$, since $W\sim\mathcal{N}(0, (2J)^{-1})$, the equation \eqref{teo.2.passaggio.1.bu} implies that the random variable $S_N$ converges to a Gaussian whose variance is $\sigma^2=(-\lambda)^{-1}-(2J)^{-1}$,
provided that
\begin{equation}\label{teo.2.passaggio.4.bu}
(-\lambda)^{-1}-(2J)^{-1}=\dfrac{\lambda+2J}{-2\lambda J}>0
\end{equation}
where  $\lambda=\frac{\partial^2}{\partial m}\widetilde{p}(m^*)\,$. Considering the function $g$ defined in (\ref{funzione.g}), we have that
$\frac{\partial^2}{\partial m}\widetilde{p}(m^*)+2J = (2J)^2\, g'(2Jm^*+h-J)$. Since $g'>0$ and $\lambda<0$ the inequality (\ref{teo.2.passaggio.4.bu}) holds true. \qed

\section*{Acknowledgments} P.C. thanks Chuck Newman for an insightful discussion. D.A. and E.M. thank Giulio Tralli for many interesting discussions. Partial support from FIRB grant RBFR10N90W and from PRIN grant 2010HXAW77 is
acknowledged.

\appendix

\section{Extended Laplace Method}

The usual Laplace method deals with integrals of the form
\[ \int_\R \big(\psi(x)\big)^n\, dx \]
as $n$ goes to infinity. In this appendix we prove a slight extension of the previous method where $\psi$ can depend on $n$. Other results in this direction can be found in \cite{ellisRosen, Laplacerate}.

\begin{theorem}\label{th: main}
For all $n\in\N$ let $\psi_n:\R\to\overline{\R}\,$.
Suppose there exists a compact interval $K\subset\R$ such that $\psi_n>0$ on $K$, so that
\[ \psi_n(x) \,=\, e^{f_n(x)} \quad\forall\,x\in K \;.\]
Suppose that $f_n\in C^2(K)$ and
\begin{itemize}
\item[a)] $f_n \,\underset{n\to\infty}\rightarrow\, f\,$ uniformly on $K\,$;
\item[b)] $f_n'' \,\underset{n\to\infty}\rightarrow\, f''\,$ uniformly on $K\,$.
\end{itemize}
Moreover suppose that:
\begin{itemize}
\item[1)] $\max_K f_n$ is attained in a point $\x_n\in\textrm{int}(K)\,$;
\item[2)] $\limsup_{n\to\infty}\left( \sup_{\R\setminus K}\log|\psi_n| - \max_{K}f_n \right) \,<\, 0\,$;
\item[3)] $\max_K f$ is attained in a unique point $\x\in K\,$;
\item[4)] $f''(\x)<0\,$;
\item[5)] $\limsup_{n\to\infty}\int_\R |\psi_n(x)|\,\d x \,<\, \infty\,$.
\end{itemize}
Then,
\eq \label{eq: main}
\int_\R \big(\psi_n(x)\big)^n\,\d x \,\underset{n\to\infty}{\sim}\, e^{n f_n(\x_n)}\,\sqrt{\frac{2\pi}{-n\,f''(\x)}} \ . \eeq
\end{theorem}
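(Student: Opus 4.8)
\noindent\emph{Proof strategy.} The plan is to write $\int_\R(\psi_n)^n\,\d x$ as the sum of its contributions from $\R\setminus K$, from a small window $[\x-r,\x+r]$ around the limiting maximiser, and from the rest of $K$, and to show that only the window contributes to the leading order, the Gaussian constant arising from a second‑order expansion there. First I would record the elementary facts: since $f_n\to f$ uniformly on $K$ one has $\max_K f_n=f_n(\xn)\to\max_K f=f(\x)$, and combining this with the uniqueness of the maximiser (hypothesis 3) and compactness of $K$, a routine subsequence argument yields $\xn\to\x$, hence $f_n(\xn)\to f(\x)$; moreover $f''$ is continuous on $K$ (being a uniform limit of the continuous $f_n''$), and $f_n'(\xn)=0$ because $\xn\in\mathrm{int}(K)$ is an interior maximiser of the $C^2$ function $f_n$. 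Then I would fix $r>0$ small enough that $[\x-r,\x+r]\subseteq\mathrm{int}(K)$ and $f''\le\tfrac12 f''(\x)<0$ on this window; by hypothesis b) it follows that, for $n$ large, $f_n''\le\tfrac14 f''(\x)<0$ on $[\x-r,\x+r]$ and $\xn\in[\x-r,\x+r]$.

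The two peripheral pieces are disposed of by crude bounds. On $\R\setminus K$, hypothesis 2) supplies $\delta>0$ with $\sup_{\R\setminus K}|\psi_n|\le e^{f_n(\xn)-\delta}$ for $n$ large, so that $\bigl|\int_{\R\setminus K}(\psi_n)^n\,\d x\bigr|\le e^{(n-1)(f_n(\xn)-\delta)}\int_\R|\psi_n|\,\d x$, which by hypothesis 5) and the boundedness of $f_n(\xn)$ is $o\bigl(e^{nf_n(\xn)}n^{-1/2}\bigr)$. On $K\setminus[\x-r,\x+r]$, hypothesis 3) together with compactness gives $\rho>0$ with $f\le f(\x)-\rho$ there, hence $f_n\le f(\x)-\rho/2$ for $n$ large by hypothesis a); since $f_n(\xn)\to f(\x)$, this contribution is at most $|K|\,e^{nf_n(\xn)}e^{-n\rho/4}$ for $n$ large, again $o\bigl(e^{nf_n(\xn)}n^{-1/2}\bigr)$.

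For the window, where $\psi_n=e^{f_n}$, I would substitute $x=\xn+t/\sqrt n$ and Taylor expand $f_n$ at $\xn$ (using $f_n'(\xn)=0$), obtaining
\[
\int_{\x-r}^{\x+r}e^{nf_n(x)}\,\d x \;=\; \frac{e^{nf_n(\xn)}}{\sqrt n}\int_\R \1_{[A_n,B_n]}(t)\,\exp\!\Bigl(\tfrac12 f_n''(\xi_{n,t})\,t^2\Bigr)\,\d t,
\]
where $[A_n,B_n]=\{t:\xn+t/\sqrt n\in[\x-r,\x+r]\}$ (so $A_n\to-\infty$, $B_n\to+\infty$) and $\xi_{n,t}$ lies between $\xn$ and $\xn+t/\sqrt n$, hence in $[\x-r,\x+r]$. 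For each fixed $t$ one has $\xi_{n,t}\to\x$, so by the uniform convergence $f_n''\to f''$ and the continuity of $f''$ the integrand tends to $\exp\bigl(\tfrac12 f''(\x)t^2\bigr)$; and for $n$ large it is dominated by $\exp\bigl(\tfrac18 f''(\x)t^2\bigr)\in L^1(\R)$ thanks to the bound on $f_n''$ on the window. Dominated convergence then gives
\[
\int_{\x-r}^{\x+r}e^{nf_n(x)}\,\d x \;\underset{n\to\infty}{\sim}\; \frac{e^{nf_n(\xn)}}{\sqrt n}\int_\R e^{\frac12 f''(\x)t^2}\,\d t \;=\; e^{nf_n(\xn)}\sqrt{\frac{2\pi}{-n\,f''(\x)}}\,,
\]
and adding the three pieces (the first two being $o$ of this) yields \eqref{eq: main}.

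The main obstacle is handling the moving maximiser $\xn$ in the change of variables. Since only $f_n''$ (not $f_n'$) is assumed to converge, the expansion must be centred at $\xn$ rather than at $\x$; the window $[\x-r,\x+r]$ must be chosen small enough both to lie in $\mathrm{int}(K)$ — so the expansion around $\xn$ stays valid and no boundary half‑Gaussian can appear, which implicitly requires $\x\in\mathrm{int}(K)$ as is automatic when applying Proposition \ref{asympzeta} — and to make $f_n''$ uniformly bounded away from $0$ there, which is precisely what produces the integrable dominating function $\exp(\tfrac18 f''(\x)t^2)$; meanwhile it must still eventually contain $\xn$. Once this is set up, the comparisons of exponential orders between the three regions are routine bookkeeping.
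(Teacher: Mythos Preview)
Your argument is correct and follows essentially the same route as the paper: Taylor expand $f_n$ at the interior maximiser $\xn$ (so that $f_n'(\xn)=0$), use $\xn\to\x$ (which the paper isolates as a separate lemma), split off a neighbourhood of the maximiser, and kill the complement exponentially via hypotheses 2), 3), 5). The only technical difference is in the treatment of the main window: the paper sandwiches $f_n''(\xi_{x,n})$ between $f''(\x)\pm2\eps$, computes the resulting Gaussian integrals explicitly, and lets $\eps\to0$, whereas you apply dominated convergence directly with the envelope $\exp\bigl(\tfrac18 f''(\x)t^2\bigr)$; both are equivalent and your version is marginally cleaner.
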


In the proof we use the following elementary fact:
\begin{lemma} \label{lemma}
Let $(f_n)_n$ be a sequence of continuous functions uniformly convergent to $f$ on a compact set $K\,$.
Let $(I_n)_n$ and $I$ be subsets of $K$ such that $\max_{x\in I_n,\,y\in I}\dist(x,y)\to 0$ as $n\to\infty\,$.
Then
\begin{itemize}
\item[$\bullet$] $\max_{I_n}f_n \,\underset{n\to\infty}{\rightarrow}\, \max_{I}f \,$;
\item[$\bullet$] $\argmax_{I_n}f_n \,\underset{n\to\infty}{\rightarrow}\, \argmax_{I}f\,$, provided that $f$ has a unique global maximum point on $I\,$.
\end{itemize}
\end{lemma}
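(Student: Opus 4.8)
The plan is to prove the elementary Lemma~\ref{lemma}, which packages two standard facts about uniform convergence that will be used repeatedly in the proof of Theorem~\ref{th: main}. I would present it as two short paragraphs, one for each bullet.

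\textbf{The maximum claim.} First I would show $\max_{I_n} f_n \to \max_I f$. Write $a_n := \max_{I_n} f_n$ and $a := \max_I f$. For the upper bound, pick $x_n \in I_n$ attaining $a_n$ and, using the hypothesis $\max_{x\in I_n,\,y\in I}\dist(x,y)\to 0$, choose $y_n \in I$ with $\dist(x_n,y_n)\to 0$; then
\[
a_n = f_n(x_n) \le |f_n(x_n)-f(x_n)| + |f(x_n)-f(y_n)| + f(y_n) \le \|f_n-f\|_{\infty,K} + \omega_f(\dist(x_n,y_n)) + a,
\]
where $\omega_f$ is the modulus of continuity of $f$ on $K$ (uniformly continuous, being continuous on a compact set); both of the first two terms tend to $0$, so $\limsup_n a_n \le a$. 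For the reverse inequality, pick $y^* \in I$ attaining $a$ and $z_n \in I_n$ with $\dist(z_n,y^*)\to 0$; then $a_n \ge f_n(z_n) \ge f(y^*) - \|f_n-f\|_{\infty,K} - \omega_f(\dist(z_n,y^*))$, so $\liminf_n a_n \ge a$. Combining, $a_n \to a$.

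\textbf{The argmax claim.} Now suppose $f$ has a unique global maximizer $\x$ on $I$, and let $\x_n \in \argmax_{I_n} f_n$. I claim $\x_n \to \x$. Suppose not; then along a subsequence $\dist(\x_{n_j},\x) \ge \rho$ for some $\rho>0$. Since $K$ is compact we may pass to a further subsequence along which $\x_{n_j} \to x_\infty \in K$; by the distance hypothesis, picking $y_{n_j}\in I$ with $\dist(\x_{n_j},y_{n_j})\to 0$ forces $y_{n_j}\to x_\infty$ as well, and since $I$ is (closed as a subset of the compact $K$—or at least its limit points lie in $\overline I$; here I would assume $I$ closed, as it is in every application) we get $x_\infty\in I$, with $\dist(x_\infty,\x)\ge\rho$, so $f(x_\infty) < a := \max_I f$ by uniqueness. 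But $f_{n_j}(\x_{n_j}) = \max_{I_{n_j}} f_{n_j} \to a$ by the first bullet, while also $f_{n_j}(\x_{n_j}) \to f(x_\infty)$ by uniform convergence together with continuity of $f$ at $x_\infty$; hence $f(x_\infty) = a$, a contradiction. Therefore every subsequence of $(\x_n)$ has a further subsequence converging to $\x$, which gives $\x_n \to \x$.

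I do not expect any serious obstacle here: the only point requiring a little care is the interplay between the two moving families $I_n$ and $I$ in the contradiction argument for the second bullet—one must be sure that the limit point $x_\infty$ of a convergent subsequence of $(\x_n)$ actually lands in $I$ (not merely in $K$), which is why the statement is applied with closed index sets, and that the value $f(x_\infty)$ is simultaneously identified as the limit of $f_{n_j}(\x_{n_j})$ and as strictly less than $\max_I f$. Everything else is a routine $\varepsilon$/modulus-of-continuity estimate.
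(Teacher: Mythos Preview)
Your proof is correct. The paper does not actually supply a proof of this lemma: it is introduced with the sentence ``In the proof we use the following elementary fact'' and then used without justification in the proof of Theorem~\ref{th: main}. So there is nothing to compare against; you have filled in what the authors left to the reader.

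Two small comments. First, the hypothesis $\max_{x\in I_n,\,y\in I}\dist(x,y)\to 0$ should of course be read as Hausdorff convergence (i.e.\ $\sup_{x\in I_n}\dist(x,I)\to 0$ and $\sup_{y\in I}\dist(y,I_n)\to 0$), since taken literally it would force both sets to collapse to a point; your use of it---picking $y_n\in I$ close to a given $x_n\in I_n$, and conversely---is exactly the intended interpretation, and matches the two applications in the paper ($I_n=I=K$, and $I_n=K\setminus B(\hat x_n,\delta_\eps)$, $I=K\setminus B(\hat x,\delta_\eps)$). Second, you are right to flag that the argmax argument needs $I$ closed so that the subsequential limit $x_\infty$ lies in $I$; in both applications $I$ is indeed closed, so this is harmless.
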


\proof[of the Theorem \ref{th: main}]
Since $\xn$ is an internal maximum point for $f_n$ \textit{(hypothesis 1)}, $f_n'(\xn)=0$ and for all $x\in K$
\eq \label{eq: taylor}
f_n(x) \,=\, f_n(\xn) + \frac{1}{2}\,f''_n(\xi_{x,n})\,(x-\xn)^2 \quad\text{with }\xi_{x,n}\in(\xn,x)\subset K \;.\eeq
Fix $\eps>0$.
Since $f_n''\underset{n\to\infty}{\rightarrow}f''$ uniformly on $K$, there exists $N_\eps$ such that
\eq \label{eq: 1}
|f_n''(\xi) - f''(\xi)| \,<\, \eps \quad\forall\,\xi\in K \quad\forall\,n> N_\eps \;.\eeq
Since $f''$ is continuous in $\x$, there exists $\delta_\eps>0$ such that $B(\x,\delta_\eps)\subset K$ and
\eq \label{eq: 2}
|f''(\xi) - f''(\x)| \,<\, \eps \quad\forall\,\xi:\,|\xi-\x|<\delta_\eps \;.\eeq
By the lemma \ref{lemma} $\xn\underset{n\to\infty}{\rightarrow}\x$, because $\x$ is the unique maximum point of $f$ on $K$ \textit{(hypothesis 3)}. Thus there exists $\bar N_{\delta_\eps}$ such that
\eq \label{eq: 3}
|\xn-\x| \,<\, \frac{\delta_\eps}{2} \quad\forall\,N>\bar N_{\delta_\eps} \;.\eeq
Therefore for $n>N_\eps\lor\bar N_{\delta_\eps}$ and $x\in B(\x,\delta_\eps)$ it holds:
\[\begin{split}
& |\xi_{x,n}-\x| \,\leq\, |\xi_{x,n}-x|+|x-\x| \,\leq\, |\xn-x|+|x-\x| \,\overset{\eqref{eq: 3}}{<}\, \frac{\delta_\eps}{2}+\frac{\delta_\eps}{2} = \delta_\eps \ \Rightarrow\\
& |f_n''(\xi_{x,n})-f''(\x)| \,\leq\, |f_n''(\xi_{x,n})-f''(\xi_{x,n})| + |f''(\xi_{x,n})-f''(\x)| \overset{\eqref{eq: 1},\eqref{eq: 2}}{<} \eps + \eps = 2\eps \;.
\end{split}\]
By substituting into \eqref{eq: taylor} we obtain that for $n>N_\eps\lor\bar N_{\delta_\eps}$ and $x\in B(\x,\delta_\eps)$
\eq \label{eq: taylor ineq}
f_n(x) \begin{cases} \,\leq\, f_n(\xn)+\frac{1}{2}\,\big(f''(\x)+2\eps\big)\,(x-\xn)^2 \\[4pt]
\,\geq\, f_n(\xn)+\frac{1}{2}\,\big(f''(\x)-2\eps\big)\,(x-\xn)^2 \end{cases} \;.
\eeq
Now split the integral into two parts:
\eq \label{eq: split}
\int_\R \big(\psi_n(x)\big)^n\,\d x \,=\, \int_{B(\xn,\delta_\eps)}e^{nf_n(\xn)}\,\d x \,+\, \int_{\R\setminus B(\xn,\delta_\eps)}\,\big(\psi_n(x)\big)^n\,\d x \;.
\eeq

$\bullet$ To control the second integral on the r.h.s. of \eqref{eq: split} we claim that there exists $\eta_{\delta_\eps}>0$ and $N^*_{\delta_\eps}$ such that
\eq \label{eq: eta}
\log|\psi_n(x)| < f_n(\xn) -\eta_{\delta_\eps} \quad\forall\,x\in\R\setminus B(\xn,\delta_\eps) \quad\forall\,N>N^*_{\delta_\eps} \;;
\eeq
namely $\limsup_{n\to\infty} \sup_{x\in\R\setminus B(\xn,\delta_\eps)} \log|\psi_n(x)|-f_n(\xn) < 0\,$.
Indeed:
\[\begin{split}
& \limsup_{n\to\infty} \sup_{x\in\R\setminus B(\xn,\delta_\eps)} \log|\psi_n(x)|-f_n(\xn) \,= \\
& \left(\limsup_{n\to\infty} \sup_{x\in K\setminus B(\xn,\delta_\eps)} f_n(x)-f_n(\xn) \right) \lor
\left(\limsup_{n\to\infty} \sup_{x\in \R\setminus K} \log|\psi_n(x)|-f_n(\xn) \right) \,=\\
& \left(\sup_{x\in K\setminus B(\x,\delta_\eps)} f(x)-f(\x) \right) \lor
\left(\limsup_{n\to\infty} \sup_{x\in \R\setminus K} \log|\psi_n(x)|-f_n(\xn) \right)
\end{split}\]
where the last identity holds true by the lemma \ref{lemma}.\\
Moreover $\sup_{x\in K\setminus B(\x,\delta_\eps)} f(x)-f(\x)<0$ since $\x$ is the unique maximum point of the continuous function $f$ on the compact set $K$ \textit{(hypothesis 3)}; while $\limsup_{n\to\infty} \sup_{x\in \R\setminus K} \log|\psi_n(x)|-f_n(\xn)<0$ by the \textit{hypothesis 2}. This proves the claim.\\
Now using \eqref{eq: eta} and the \textit{hypothesis 5}, there exist $C$ and $N$ such that for all $n>N\lor N^*_{\delta_\eps}$
\eq \begin{split} \label{eq: second integral}
\int_{\R\setminus B(\xn,\delta_\eps)} \big|\psi_n(x)\big|^n\,\d x \,&\leq\,
e^{(n-1)\left(f_n(\x_n)-\eta_{\delta_\eps}\right)}\, \int_\R |\psi_n(x)| \,\d x \\
&\leq\, C\, e^{n\left(f_n(\x_n)-\eta_{\delta_\eps}\right)} \;.
\end{split} \eeq

$\bullet$ To study the first integral on the r.h.s. of \eqref{eq: split}, choose $\eps\in(0,\eps_0]$, where $f''(\x)+2\eps_0<0$ \textit{(hypothesis 4)}.
By \eqref{eq: taylor ineq}, since we can compute Gaussian integrals, we find an upper bound:
\eq \label{eq: ub} \begin{split}
\int_{B(\xn,\delta_\eps)}e^{nf_n(x)}\,\d x \,&\leq\,
e^{nf_n(\xn)}\, \int_\R e^{\frac{n}{2}\,(f''(\x)+2\eps)\,(x-\xn)^2}\,\d x \\
&=\, e^{nf_n(\xn)}\, \frac{1}{\sqrt{-\frac{n}{2}\,(f''(\x)+2\eps)}}\,\int_\R e^{-x^2}\,\d x \\
&=\, e^{nf_n(\xn)}\, \sqrt{\frac{2\pi}{-n\,(f''(\x)+2\eps)}}
\end{split} \eeq
and a lower bound:
\eq \label{eq: lb} \begin{split}
\int_{B(\xn,\delta_\eps)}e^{nf_n(x)}\,\d x \,&\geq\,
e^{nf_n(\xn)}\, \int_{B(\xn,\delta_\eps)} e^{\frac{n}{2}\,(f''(\x)+2\eps)\,(x-\xn)^2}\,\d x \\
&=\, e^{nf_n(\xn)}\, \frac{1}{\sqrt{-\frac{n}{2}\,(f''(\x)+2\eps)}}\,\int_{B\big(0,\,\delta_\eps\sqrt{-\frac{n}{2}(f''(\x)-2\eps)}\big)} e^{-x^2}\,\d x \\
&=\, e^{nf_n(\xn)}\, \sqrt{\frac{2\pi}{-n\,(f''(\x)-2\eps)}}\;(1+\omega_{n,\eps,\delta_\eps})
\end{split} \eeq
where $\omega_{n,\eps,\delta_\eps}\to 0$ as $n\to\infty\,$ and $\eps$ is fixed.

In conclusion, by \eqref{eq: split}, \eqref{eq: second integral}, \eqref{eq: ub}, \eqref{eq: lb} we obtain that for $\eps\in(0,\eps_0]$ and $n>N_\eps\lor\bar N_{\delta_\eps}\lor N\lor N^*_{\delta_\eps}$ it holds:
\[\begin{split}
\frac{\int_\R \big(\psi_n(x)\big)^n\,\d x}{e^{nf_n(\xn)}\sqrt{\frac{2\pi}{-n\,f''(\x)}}} \;&\leq\;
\sqrt{\frac{f''(\x)}{f''(\x)+2\eps}} \,+\, C\,\sqrt{-\frac{n\,f''(\x)}{2\pi}}\,e^{-n\,\eta_{\delta_\eps}} \\
&\underset{n\to\infty}{\rightarrow}\, \sqrt{\frac{f''(\x)}{f''(\x)+2\eps}}
\,\underset{\eps\to0}{\rightarrow}\, 1 \;;
\end{split}\]
and:
\[\begin{split}
\frac{\int_\R \big(\psi_n(x)\big)^n\,\d x}{e^{nf_n(\xn)}\sqrt{\frac{2\pi}{-n\,f''(\x)}}} \;&\geq\;
\sqrt{\frac{f''(\x)}{f''(\x)-2\eps}}\,(1+\omega_{n,\eps,\delta_\eps}) \,-\, C\,\sqrt{-\frac{n\,f''(\x)}{2\pi}}\,e^{-n\,\eta_{\delta_\eps}} \\
&\underset{n\to\infty}{\rightarrow}\, \sqrt{\frac{f''(\x)}{f''(\x)-2\eps}}
\,\underset{\eps\to0}{\rightarrow}\, 1 \;;
\end{split}\]
hence \eqref{eq: main} is proved.\qed
\endproof

\end{document}